\newcounter{parentnumber}
\numberwithin{equation}{section}
\theoremstyle{plain}
\newtheorem{Assump}{Assumption}[section]
\newtheorem{Cond}{Condition}[section]
\newtheorem{Theo}{Theorem}[section]
\newtheorem{Lemma}{Lemma}[section]
\newtheorem{Algor}{Algorithm}[section]
\theoremstyle{remark}
\newtheorem{Remark}{Remark}[section]
\newcommand{\ep}{\epsilon}
\newcommand{\vep}{\varepsilon}
\newcommand{\sig}{\sigma}
\newcommand{\Sig}{\Sigma}
\newcommand{\sumn}{\sum_{t=1}^{n}}
\newcommand{\Ps}{P^{*}}
\newcommand{\Es}{E^{*}}
\newcommand{\Vars}{Var^{*}}
\newcommand{\Covs}{Cov^{*}}
\newcommand{\hphi}{\hat{\phi}}
\newcommand{\hphis}{\hat{\phi}^{*}}
\newcommand{\ts}{t^{*}}
\newcommand{\hV}{\hat{V}}
\newcommand{\hVs}{\hat{V}^{*}}
\newcommand{\bV}{\bm{V}}
\newcommand{\cV}{\check{V}}
\newcommand{\Vs}{V^{*}}
\newcommand{\bVs}{\bm{V}^{*}}
\newcommand{\hbV}{\hat{\bm{V}}}
\newcommand{\cbV}{\check{\bm{V}}}
\newcommand{\bep}{\bm{\ep}}
\newcommand{\be}{\bm{e}}
\newcommand{\bvep}{\bm{\vep}}
\newcommand{\hgamma}{\hat{\gamma}}
\newcommand{\hrho}{\hat{\rho}}
\newcommand{\hep}{\hat{\ep}}
\newcommand{\he}{\hat{e}}
\newcommand{\hl}{\hat{l}}
\newcommand{\hbep}{\hat{\bm{\ep}}}
\newcommand{\hbe}{\hat{\bm{e}}}
\newcommand{\cep}{\check{\ep}}
\newcommand{\ce}{\check{e}}
\newcommand{\cvep}{\check{\vep}}
\newcommand{\cbep}{\check{\bm{\ep}}}
\newcommand{\cbe}{\check{\bm{e}}}
\newcommand{\cbvep}{\check{\bm{\vep}}}
\newcommand{\hsig}{\hat{\sig}}
\newcommand{\hSig}{\hat{\Sig}}
\newcommand{\tSig}{\tilde{\Sig}}
\newcommand{\beps}{\bm{\ep}^{*}}
\newcommand{\bes}{\bm{e}^{*}}
\newcommand{\bveps}{\bm{\vep}^{*}}
\newcommand{\bxis}{\bm{\xi}^{*}}
\newcommand{\lnrl}{\lfloor nr \rfloor}
\newcommand{\lnsl}{\lfloor ns \rfloor}
\newcommand{\lnul}{\lfloor nu \rfloor}
\newcommand{\lnvl}{\lfloor nv \rfloor}
\newcommand{\lnwl}{\lfloor nw \rfloor}
\newcommand{\sigs}{\sig^{*}}
\newcommand{\Ys}{Y^{*}}
\newcommand{\Op}{O_{p}(1)}
\newcommand{\op}{o_{p}(1)}
\title{Linear Process Bootstrap Unit Root Test}
\author{Nan Zou\thanks{Email address: nzou@ucsd.edu.} }
\author{Dimitris Politis\thanks{Email address: dpolitis@ucsd.edu.}}
\affil{Department of Mathematics, University of California-San Diego, \newline La Jolla, CA 92093}
\date{\vspace{-5ex}}
\begin{document}
\maketitle

\begin{abstract}
One of the most widely applied unit root test, Phillips-Perron test, enjoys in general high powers, but suffers from size distortions when moving average noise exists. As a remedy, this paper proposes a nonparametric bootstrap unit root test that specifically targets moving average noise. Via a bootstrap functional central limit theorem, the consistency of this bootstrap approach is established under general assumptions which allows a large family of non-linear time series. In simulation, this bootstrap test alleviates the size distortions of the Phillips-Perron test while preserving its high powers.     
\end{abstract}

\section{Introduction}
Among extensive literature on unit root test, the Augmented Dickey-Fuller (ADF) test and the Phillips-Perron (PP) test are perhaps the most renowned. When put into simulation, PP test has been found to enjoy higher power than ADF test but suffers greater size distortion, especially under negative Moving Average (MA) noises (\textcite{phillips1988,nabeya1994,cheung1997,leybourne1999}). For a solution to this size distortion, see \textcite{perron1996}.\\ 
\\
We propose a bootstrap unit root test as a remedy. When the asymptotic distributions of the test statistics involve unknown parameters, bootstrap circumvents the estimation of the unknown parameters and as a result eases the hypothesis test. On the other hand, when the asymptotic distributions are pivotal, bootstrap unit root test may enjoy second order efficiency, and may consequently reduce the aforementioned size distortion (\textcite{park2003}). Variants of bootstrap unit root test include AutoRegressive (AR) sieve bootstrap test (\textcite{psaradakis2001,palm2008}), block bootstrap test (\textcite{paparoditis2003}),  stationary bootstrap test (\textcite{swensen2003,parker2006}), and wild bootstrap test (\textcite{cavaliere2009}). \\
\\
To target the size distortion with MA noise, we apply Linear Process Bootstrap (LPB) of \textcite{mcmurry2010} to unit root test. As the closest analogue of MA-sieve bootstrap, LPB first estimates the autocovariance matrix of the noise, then pre-whitens the noise with the estimated autocovariance matrix, then bootstraps from the pre-whitened noise, and finally post-colors the bootstrap noise with the the estimated autocovariance matrix. In sample mean case, \textcite{mcmurry2010, jentsch2015} indicate good asymptotic and empirical performance of LPB, particularly in the presence of MA noise. \\
\\
As a result, LPB unit root test becomes a promising solution to the size distortion under MA noise. To develop a large sample theory for LPB unit root test, we extend the bootstrap Central Limit Theorem (CLT) for LPB into a regression setting, establish a bootstrap Functional CLT (FCLT) for LPB, and prove the consistency of this bootstrap method. Despite its name, LPB unit root test turns out to be asymptotically valid under not only linear noises but also a large family of non-linear noises, i.e., the physical dependent process defined in \textcite{wu2005}. \\
\\
This paper proceeds as follows. Section \ref{sec:PP} specifies the physical dependence assumption and recalls the popular Phillips-Perron test. Section \ref{sec:LPB} introduces LPB unit root test, details the estimation of the autocovariance matrix, and describes the adaptive bandwidth selection. Section \ref{sec:simulation} presents the empirical results of LPB unit root test. Appendix includes all technical proofs. 
\section{Phillips-Perron Test}\label{sec:PP}
Suppose $\{Y_{t}\}_{t=1}^{n}$ is observable. For $t\in\mathbb{N}^{+} $, define $\phi_{t}$ and $V_{t}$ as the prediction coefficient and error, respectively, when predicting $Y_{t}$ with $Y_{t-1}$. Suppose $\phi_{t}=\phi$ for all $t\in \mathbb{N}^{+}$. Then
\begin{equation}\label{eqn:model}
Y_{t}=\phi Y_{t-1}+V_{t}. 
\end{equation}
Now we assume the noise sequence $\{V_{t}\}_{t\in \mathbb{Z}}$ is strictly stationary, short-range dependent, and invertible. Specifically, consider the following assumptions on $\{V_{t}\}_{t\in \mathbb{Z}}$.
\begin{Assump}\label{assump:weak}
Let $\{\ep_{t}\}_{t\in \mathbb{Z}}$ be a sequence of i.i.d. random variables. Let $\ep_{0}'$ be identically distributed with $\ep_{0}$, and be independent of $\{\ep_{t}\}_{t\in \mathbb{Z}}$. Suppose $V_{t}=g(...,\ep_{t-1},\ep_{t})$. Let $V_{t}'=g(...,\ep_{-1},\ep_{0}',\ep_{1},...,\ep_{t})$, and let $\delta_{p}(t)=(E(|V_{t}-V_{t}'|^{p}))^{1/p}$ be the physical dependence measure of $\{V_{t}\}$. Suppose $\sum_{t=1}^{\infty}\delta_{4}(t)<\infty$. Let $\gamma(h)=E(V_{t}V_{t-h})$. Suppose $\sum_{h\in \mathbb{Z}}\gamma(h)>0$, $\sum_{h=0}^{\infty}h|\gamma(h)|<\infty$, $E(V_{t})=0$, and $E(V_{t}^{4})<\infty$.
\end{Assump}
\begin{Assump}\label{assump:strong}
Recall Assumption \ref{assump:weak}. Further assume that for some $p>4$, $\sum_{t=1}^{\infty}\delta_{p}(t)<\infty$ and $E(|V_{t}|^{p})<\infty$; for some $\beta>2$, $|\gamma(h)|=o(h^{-\beta})$; for some $\alpha>0$,  $h^{\alpha}\sum_{k=h+1}^{\infty}|\gamma(k)|$ is non-increasing when $h$ is large enough.
\end{Assump}
When $\phi=1$, suppose $Y_{0}=0$; then $\{Y_{t}\}_{t\in \mathbb{N}^{+}}$ is a unit root process starting at zero. When $\phi<0$, suppose \eqref{eqn:model} holds for all $t\in \mathbb{Z}$; then $\{Y_{t}\}_{t\in \mathbb{N}^{+}}$ is a strictly stationary process. To separate these two cases, we test 
\begin{equation}
H_{0}: \ \phi=1 \ \text{vs}\ H_{1}: \phi<1. 
\end{equation}
The famous PP test centers on the OLS estimator $\hphi$ in $Y_{t}=\hphi Y_{t-1}+\hV_{t}$, and its t-statistic $t$. Under Assumption \ref{assump:weak}, the asymptotic null distributions of $\hphi$ and $t$ results from the FCLT in Lemma \ref{lemma:FCLT}. 
\begin{Lemma}[\textcite{wu2005}]\label{lemma:FCLT}
Suppose Assumption \ref{assump:weak} holds. Let $\sig^{2}=Var(n^{-1/2} \sumn V_{t})$, $S(u)=n^{-1/2}\sig^{-1} \sum_{t=1}^{\lnul} V_{t}$. Let $W(u)$ be a standard Brownian motion. If $\phi=1$, $S\Rightarrow W$. 
\end{Lemma}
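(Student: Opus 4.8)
The plan is to prove the weak convergence $S\Rightarrow W$ in $D[0,1]$ by the classical two-step route: first establish convergence of the finite-dimensional distributions, then prove tightness of $\{S\}$. Observe at the outset that the hypothesis $\phi=1$ plays no role in the statement, since $S(u)$ is a functional of $\{V_{t}\}$ alone; only Assumption \ref{assump:weak} will be used. Throughout, write $\mathcal F_{t}=\sig(\ldots,\ep_{t-1},\ep_{t})$ and let $\mathcal P_{t}(\cdot)=E(\cdot\mid\mathcal F_{t})-E(\cdot\mid\mathcal F_{t-1})$ be the martingale projection operators; note that $\mathcal F_{-\infty}$ is trivial, so $E(V_{t}\mid\mathcal F_{-\infty})=E(V_{t})=0$ and $V_{t}=\sum_{j\le t}\mathcal P_{j}V_{t}$ in $L^{4}$.

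For the finite-dimensional distributions I would use Wu's martingale approximation. The physical-dependence bound gives $\|\mathcal P_{0}V_{t}\|_{4}\le\delta_{4}(t)$, so $\sum_{t}\delta_{4}(t)<\infty$ implies that $D_{0}:=\sum_{t\ge 0}\mathcal P_{0}V_{t}$ converges in $L^{4}$; shifting, $\{D_{t}\}$ is a stationary ergodic sequence of $\{\mathcal F_{t}\}$-martingale differences, and the Abel-summation/telescoping argument of Wu (2005) yields $n^{-1/2}\|\sum_{t=1}^{\lnul}V_{t}-\sum_{t=1}^{\lnul}D_{t}\|_{2}\to 0$ uniformly in $u\in[0,1]$. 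The martingale functional CLT applied to $n^{-1/2}\sum_{t=1}^{\lnul}D_{t}$ then gives convergence in $D[0,1]$ to $\|D_{0}\|_{2}\,W(u)$, and the standard long-run-variance identity gives $\|D_{0}\|_{2}^{2}=\sum_{h\in\mathbb Z}\gamma(h)=:\sig_{\infty}^{2}$, which is finite since $\sum_{h\ge 0}h|\gamma(h)|<\infty$ and strictly positive by assumption. Combining with the approximation, $n^{-1/2}\sum_{t=1}^{\lnul}V_{t}\Rightarrow\sig_{\infty}W(u)$. Finally, since $\sig^{2}=Var(n^{-1/2}\sumn V_{t})=\sum_{|h|<n}(1-|h|/n)\gamma(h)\to\sig_{\infty}^{2}>0$ by dominated convergence, Slutsky's theorem converts this into $S=n^{-1/2}\sig^{-1}\sum_{t=1}^{\lnul}V_{t}\Rightarrow W$.

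For tightness, the cleanest route is again through the martingale approximation: tightness of $n^{-1/2}\sum_{t=1}^{\lnul}D_{t}$ is part of the martingale FCLT, while the remainder $n^{-1/2}(\sum_{t=1}^{\lnul}V_{t}-\sum_{t=1}^{\lnul}D_{t})$ is asymptotically negligible uniformly in $u$; for this I would invoke a maximal inequality under physical dependence, e.g. $\|\max_{i\le n}|\sum_{t=1}^{i}V_{t}|\|_{q}\le C_{q}\sqrt{n}\sum_{t\ge 0}\delta_{q}(t)$ for $q\ge 2$, applied to the partial sums of $V_{t}-D_{t}$, whose own dependence measures remain summable. Alternatively one can bypass the decomposition and verify Billingsley's moment criterion directly: the uniform fourth-moment bound $E|\sum_{t=m+1}^{m+k}V_{t}|^{4}\le C k^{2}$, valid under $E(V_{t}^{4})<\infty$ and $\sum_{t}\delta_{4}(t)<\infty$, is exactly the increment condition needed for tightness in $D[0,1]$.

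I expect the main obstacle to be the tightness step, specifically the uniform control of the martingale-approximation remainder (equivalently, the uniform fourth-moment bound on increments of $\sum_{t}V_{t}$): this is where the physical-dependence maximal inequalities of Wu do the real work, whereas the finite-dimensional convergence and the identification of the limit variance $\sig_{\infty}^{2}=\sum_{h}\gamma(h)$ are routine once the martingale approximation is in place.
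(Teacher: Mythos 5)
The paper offers no proof of this lemma at all---it is imported directly from Wu (2005), as the bracketed citation in the lemma's header indicates---so there is nothing internal to compare against; your reconstruction via the projection operators $\mathcal{P}_0$, the $L^2$ martingale approximation with $D_0=\sum_{t\ge 0}\mathcal{P}_0V_t$, the identification $\|D_0\|_2^2=\sum_{h\in\mathbb{Z}}\gamma(h)>0$, and tightness from the physical-dependence maximal/fourth-moment inequalities is precisely the argument underlying the cited result, and it is correct as sketched. Your side observation that the hypothesis $\phi=1$ is immaterial to the conclusion (since $S$ is a functional of $\{V_t\}$ alone, the role of $\phi=1$ is only in how the lemma is subsequently applied to $\{Y_t\}$) is also accurate.
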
 
\section{Linear Process Bootstrap Unit Root Test}\label{sec:LPB}
As mentioned in introduction, PP test enjoys high empirical powers, but suffers from empirical size distortions under negative MA noise. To mitigate the size distortion while preserving the high power, we introduce LPB unit root test below. The name of LPB follows from the fact that the bootstrapped noise is a linear process.\\
\\
Let $\bV=(V_{1},...,V_{n})'$, $\hbV=(\hV_{1},...,\hV_{n})'$, $\bar{\hbV}=(\bar{\hV}_{1},...,\bar{\hV}_{n})'$,
$\bVs=(\Vs_{1},...,\Vs_{n})'$,
$\hbep=(\hep_{1},...,\hep_{n})'$, and $\beps=(\ep_{1}^{*},...,\ep_{n}^{*})'$. Let  $\Sig=Var(\bV)$ and $\hSig_{\hV}$ be a positive definite estimator of $\Sig$. In Algorithm \ref{algor:matrix} we will further specify $\hSig_{\hV}$. Let ${\hSig_{\hV}}^{1/2}$ be a lower triangular matrix that satisfies Cholesky decomposition  ${\hSig_{\hV}}^{1/2}{\hSig_{\hV}}^{{1/2}'}=\hSig_{\hV}$, and ${\hSig_{\hV}}^{-1/2}$ be the inverse matrix of ${\hSig_{\hV}}^{1/2}$. Let $\bar{Y}_{t}=n^{-1}\sumn{Y}_{t}$, $\bar{V}_{t}=n^{-1}\sumn{V}_{t}$, $\bar{\hV}_{t}=n^{-1}\sumn{\hV}_{t}$, $\bar{\hep}_{t}=n^{-1}\sumn{\hep}_{t}$, and $\hsig_{\hep}^{2}=n^{-1}\sumn(\hep_{t}-\bar{\hep}_{t})^{2}$. Let $\Ps$, $\Es$, $\Vars$, $\Covs$ be the probability, expectation, variance, and covariance, respectively, conditional on data $\{Y_{t}\}$.
\begin{Algor}\emph{[Linear process bootstrap unit root test]}\label{algor:LPB}
\item 
Step 1: regress $Y_{t}=\hphi Y_{t-1}+\hV_{t}$; record $\hphi$ and its t-statistic $t$. \\
Step 2: let $\cV_{t}=\hV_{t}-\bar{\hV}_{t}$; let $\hbep=\hSig_{\hV}^{-1/2}\cbV$; let $\cep_{t}=(\hep_{t}-\bar{\hep}_{t})/\hsig_{\hep}$. \\
Step 3: randomly sample $\ep_{1}^{*},...,\ep_{n}^{*}$ from  $\{\cep_{1},...,\cep_{n}\}$. \\
Step 4: let $\bVs=\hSig_{\hV}^{1/2}\beps$; let $\Ys_{t}=\Ys_{t-1}+\Vs_{t}$ and $\Ys_{0}=0$. \\
Step 5: regress $\Ys_{t}=\hphis \Ys_{t-1}+\hVs_{t}$; record $\hphis$ and its t-statistic $\ts$.\\
Step 6: run Step 3-5 for $B$ times and get $\{\hphis_{1},...,\hphis_{B}\}$ and $\{\ts_{1},...,\ts_{B}\}$. \\
Step 7: reject the null if $B^{-1}\sum_{i=1}^{B}1\{\hphi>\hphis_{i}\}<size$, or alternatively, $B^{-1}\sum_{i=1}^{B}1\{t>\ts_{i}\}<size$. 
\end{Algor}

Now we specify $\hSig_{\hV}$, the estimator of the autocovariance matrix $\Sig$. Noticing the inconsistency of the sample autocovariance matrix, \textcite{mcmurry2010} propose a new autocovariance matrix estimator $\hSig_{\hV}$ detailed in Algorithm \ref{algor:matrix} below. By construction, $\hSig_{\hV}$ is positive definite and possesses a banded structure. By letting the bandwidth of this banded structure goes to infinity as sample size goes to infinity, $\hSig_{\hV}$ becomes a consistent estimator of $\Sig$. Since the autocovariance matrix of a finite-order MA process has as well a banded nature, $\hSig_{\hV}$ constitutes a MA-sieve estimator of $\Sig$ and hence performs especially well with MA noise. 
\begin{Algor}[\textcite{mcmurry2010}]\emph{[Estimation of the autocovariance matrix]}\label{algor:matrix}
\item
Let $\hgamma_{\hV}(h)=n^{-1}\sum_{t=|h|+1}^{n}\hV_{t}\hV_{t-|h|}$. Define kernel function $\kappa(\cdot)$ by 
$$\kappa(x)=\begin{cases}
1,      & \text{if } |x| \leq 1, \\
g(|x|),  & \text{if } 1<|x| \leq c_{\kappa}, \\
0,      & \text{if } |x| > c_{\kappa}. \\
\end{cases}$$ 
where $g(\cdot)$ is a function satisfying $|g(x)|<1$, and $c_{\kappa}$ is a constant satisfying $c_{\kappa}\geq 1$. An example of kernel function $\kappa(x)$ is the trapezoid kernel of \textcite{politis1995}:
\begin{equation}\label{eqn:kernel}
\kappa(x)=\begin{cases}
1,      & \text{if } |x| \leq 1, \\
2-|x|,  & \text{if } 1<|x| \leq 2, \\
0,      & \text{if } |x| > 2. \\
\end{cases}
\end{equation}
Let $\kappa_{l}=\kappa(x/l)$, where $l$ is a kernel bandwidth to be determined. Define tapered covariance matrix estimator $$\tSig_{\hV}=[\kappa_{l}(i-j)\hgamma_{\hV}(i-j)]_{i,j=1}^{n}.$$ Suppose $\tSig_{\hV}=TDT',$ where $T$ is orthogonal and $D=diag(d_{1},...,d_{n})$ is diagonal. Let $\hat{d}_{j}=\max(d_{j},\hgamma_{0}n^{-1})$, $\hat{D}=diag(\hat{d}_{1},...,\hat{d}_{n})$, and $\hSig_{\hV}=T\hat{D}T'$. 
\end{Algor}
Under certain conditions on the kernel bandwidth $l$, Algorithm \ref{algor:LPB} and Algorithm \ref{algor:matrix} together present a consistent bootstrap approach. First, a bootstrap FCLT with respect to bootstrap noise $\{\Vs_{t}\}$ is established in Lemma \ref{lemma:FCLT}. Based on this bootstrap FCLT, the conditional distributions of the bootstrap statistics $\hphis$ and $\ts$ converge to the asymptotic null distributions of $\hphi$ and $t$, respectively. Hence the justification of LPB unit root test, and as a byproduct the validity of LPB in regression. 
\begin{Cond}\label{cond:bandwidth}
Let $r_{n}=r_{n}(l)=ln^{-1/2}+\sum_{h=l}^{\infty}|\gamma(h)|$. Suppose $l=l_{n}$ satisfies $r_n=O(n^{-1/4}).$
\end{Cond}
\begin{Remark}
There exists $l=l_{n}$ such that Condition \ref{cond:bandwidth} holds, for example, $l=n^{1/4}$. Together with Assumption \ref{assump:weak}, Condition \ref{cond:bandwidth} guarantees that the operator norm of $\hSig_{\hV}-\Sig$ decays at a rate faster or equal to $n^{-1/4}$, and as a result the measures of the partial sum processes $\{S^{*}(u)\}$ in Lemma \ref{lemma:bootstrap FCLT} are tight.
\end{Remark}
\begin{Lemma}\label{lemma:bootstrap FCLT}
Suppose Assumption \ref{assump:weak} and Condition \ref{cond:bandwidth} hold. Let ${\sigs}^{2}=\Vars(n^{-1/2} \sum_{t=1}^{n}\Vs_{t})$, $S^{*}(u)=n^{-1/2}{\sigs} ^{-1} \sum_{t=1}^{\lnul}\Vs_{t}$. Let $W(u)$ be a standard Brownian motion. Then no matter if $\phi=1$ or $\phi<1$, $S^{*}\Rightarrow W$ in probability.
\end{Lemma}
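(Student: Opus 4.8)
\emph{Proof proposal.} The plan is to prove, conditionally on the data and in probability, finite-dimensional convergence of $\{S^{*}(u)\}_{u\in[0,1]}$ together with tightness, and then to conclude by the usual characterisation of weak convergence in $D[0,1]$. The basic structural observation is that, conditionally on $\{Y_{t}\}$, the resampled innovations $\ep_{1}^{*},\dots,\ep_{n}^{*}$ are i.i.d.\ with $\Es\ep_{j}^{*}=0$ and $\Vars\ep_{j}^{*}=1$, because of the centring and scaling in Step~2; hence $\bVs=\hSig_{\hV}^{1/2}\beps$ has $\Covs(\bVs)=\hSig_{\hV}$ \emph{exactly}. Writing $\mathbf 1_{[a:b]}$ for the $0/1$ vector with ones in coordinates $a,\dots,b$, this gives $\Covs\bigl(\sum_{t\le\lnvl}\Vs_{t},\sum_{t\le\lnul}\Vs_{t}\bigr)=\mathbf 1_{[1:\lnvl]}'\hSig_{\hV}\mathbf 1_{[1:\lnul]}$ and ${\sigs}^{2}=n^{-1}\mathbf 1_{[1:n]}'\hSig_{\hV}\mathbf 1_{[1:n]}$, so the entire argument reduces to controlling $\hSig_{\hV}$, its Cholesky factor $\hSig_{\hV}^{1/2}$, and $\hSig_{\hV}^{-1/2}$.

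First I would reduce everything to the true noise. From $\hV_{t}-V_{t}=-(\hphi-\phi)Y_{t-1}$, using $\hphi-\phi=O_{p}(n^{-1})$ under $H_{0}$ and $\hphi-\phi=O_{p}(n^{-1/2})$ under $H_{1}$ together with the fact that the sample cross-moments $n^{-1}\sum_{t}Y_{t-1}V_{t-h}$ and $(\hphi-\phi)^{2}n^{-1}\sum_{t}Y_{t-1}Y_{t-h}$ are of the correct (small) order in each regime, one gets $\sup_{h}|\hgamma_{\hV}(h)-\hgamma_{V}(h)|=O_{p}(n^{-1/2})$, so by Condition~\ref{cond:bandwidth} the banded quantities $\hgamma_{\hV}$, $\tSig_{\hV}$, $\hSig_{\hV}$ are interchangeable --- in operator norm, up to $O_{p}(r_{n})=O_{p}(n^{-1/4})$ --- with the analogues built from $\{V_{t}\}$. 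Then I would invoke the operator-norm consistency $\|\hSig_{\hV}-\Sig\|_{\mathrm{op}}=O_{p}(n^{-1/4})$ of \textcite{mcmurry2010} (the content of the Remark following Condition~\ref{cond:bandwidth}) together with the companion facts that $\|\hSig_{\hV}\|_{\mathrm{op}}=\Op$, that $\lambda_{\min}(\hSig_{\hV})$ is bounded away from $0$ with probability tending to one (invertibility of $\{V_{t}\}$ forces $\Sig$ to have spectrum bounded below, and the truncation $\hat d_{j}\ge\hgamma_{0}n^{-1}$ handles the rest), and that $\hSig_{\hV}^{\pm1/2}=\Sig^{\pm1/2}+O_{p}(n^{-1/4})$ in operator norm with $\Sig^{\pm1/2}$ fixed filters having uniformly absolutely summable rows (here $\sum_{h}h|\gamma(h)|<\infty$ and invertibility give, via Bernstein's theorem, summability of the MA$(\infty)$ and AR$(\infty)$ coefficients of the noise). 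Since $\Sig=[\gamma(i-j)]$ is Toeplitz with $\sum_{h}|\gamma(h)|<\infty$, these facts immediately yield $\Covs(S^{*}(u),S^{*}(v))\to u\wedge v$ and ${\sigs}^{2}\to\sum_{h\in\mathbb Z}\gamma(h)=\lim\sig^{2}=:\sigma_{\infty}^{2}>0$.

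With these in hand, the finite-dimensional limits follow from a conditional triangular-array CLT. I would first note that $\Es|\ep_{j}^{*}|^{4}=n^{-1}\sum_{t}\cep_{t}^{4}=\Op$: since $\cep_{t}$ is $\hep_{t}$ re-centred and re-scaled and $\hbep=\hSig_{\hV}^{-1/2}\cbV$, this reduces, after splitting off the fixed filter $\Sig^{-1/2}$ as above, to $E|(\Sig^{-1/2}\bm V)_{t}|^{4}<\infty$, which holds since $EV_{t}^{4}<\infty$. Then for $0=u_{0}<u_{1}<\dots<u_{k}\le1$ I write $S^{*}(u_{i})-S^{*}(u_{i-1})=n^{-1/2}{\sigs}^{-1}\sum_{j=1}^{n}w_{ij}\ep_{j}^{*}$ with $w_{ij}=\sum_{t=\lfloor nu_{i-1}\rfloor+1}^{\lfloor nu_{i}\rfloor}[\hSig_{\hV}^{1/2}]_{tj}$; summability of the Cholesky entries gives $\max_{j}|w_{ij}|=\Op$ while $\sum_{j}w_{ij}^{2}\asymp n(u_{i}-u_{i-1})$, so the conditional Lindeberg condition holds in probability and the Lindeberg--Feller CLT applied conditionally gives joint conditional asymptotic normality of the increments with covariance $\operatorname{diag}(u_{1}-u_{0},\dots,u_{k}-u_{k-1})$ in probability; equivalently $(S^{*}(u_{1}),\dots,S^{*}(u_{k}))\Rightarrow(W(u_{1}),\dots,W(u_{k}))$ in probability.

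For tightness I would control fourth moments of increments: for $v<u$ with $\lnul>\lnvl$, $S^{*}(u)-S^{*}(v)=n^{-1/2}{\sigs}^{-1}\sum_{j}\tilde w_{j}\ep_{j}^{*}$ with $\tilde w_{j}=\sum_{t=\lnvl+1}^{\lnul}[\hSig_{\hV}^{1/2}]_{tj}$, and expanding the fourth moment of an i.i.d., mean-zero weighted sum and using $\sum_{j}\tilde w_{j}^{4}\le(\max_{j}\tilde w_{j}^{2})\sum_{j}\tilde w_{j}^{2}$, $\max_{j}|\tilde w_{j}|=\Op$, $\sum_{j}\tilde w_{j}^{2}\le\|\hSig_{\hV}\|_{\mathrm{op}}(\lnul-\lnvl)$ and $\Es|\ep_{1}^{*}|^{4}=\Op$ gives $\Es\bigl(S^{*}(u)-S^{*}(v)\bigr)^{4}\le K_{n}(u-v)^{2}$ for $|u-v|\ge n^{-1}$, with $K_{n}=\Op$ (and the increment is $0$ when $\lnul=\lnvl$). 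This moment criterion forces tightness of $\{S^{*}\}$ in probability, and combined with the finite-dimensional convergence it gives $S^{*}\Rightarrow W$ in probability; since every bound above is uniform over $\phi\le1$, the conclusion holds whether $\phi=1$ or $\phi<1$. The hard part will be the package of structural facts about the estimated autocovariance operator collected in the second paragraph --- the operator-norm consistency and, above all, the uniform row-summability of $\hSig_{\hV}^{\pm1/2}$ and the lower bound on $\lambda_{\min}(\hSig_{\hV})$ --- proved under the weak Assumption~\ref{assump:weak} (which gives only $\sum_{h}h|\gamma(h)|<\infty$, no polynomial decay rate) and carried out uniformly over the null and the alternative, where $\{Y_{t}\}$ has completely different orders of magnitude; once those are available, the triangular-array CLT and the moment bound for tightness are routine.
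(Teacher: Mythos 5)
Your overall architecture --- conditional finite-dimensional convergence plus a fourth-moment tightness bound, after reducing $\hSig_{\hV}$ to $\Sig$ in operator norm and treating the $\hV$-versus-$V$ discrepancy separately under $\phi=1$ and $\phi<1$ --- matches the paper's, and your structural observations that $\Covs(\bVs)=\hSig_{\hV}$ exactly and ${\sigs}^{2}=n^{-1}\bm{1}'\hSig_{\hV}\bm{1}$ are precisely what drives Lemma \ref{lemma:covariance} and Lemma \ref{lemma:sigma}. The organizational difference is that the paper never works with $\beps$ directly: it couples the actual bootstrap innovations to two auxiliary draws built from the \emph{true} whitening filter $\Sig^{-1/2}$ through a common selection matrix $M^{*}$, writes $S^{*}=R_{1}^{*}+R_{2}^{*}+R_{3}^{*}$, obtains the finite-dimensional limits of the leading term $R_{1}^{*}$ by citing Theorem 5 of \textcite{mcmurry2010} together with the Cram\'er--Wold device, and kills the two remainders with the fourth-moment bounds of Lemma \ref{lemma:fourth term}. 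Your direct route is legitimate in principle and arguably cleaner, but it forces you to verify the conditional Lindeberg condition yourself, and that is where the gap sits.

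The gap is the claim $\max_{j}|w_{ij}|=\Op$, equivalently a uniform bound on (partial) absolute column sums of the estimated Cholesky factor $\hSig_{\hV}^{1/2}$, on which both your Lindeberg verification and your bound $\sum_{j}\tilde w_{j}^{4}\leq(\max_{j}\tilde w_{j}^{2})\sum_{j}\tilde w_{j}^{2}$ rest. Operator-norm consistency $||\hSig_{\hV}-\Sig||=O_{p}(n^{-1/4})$ controls quadratic forms such as $\sum_{j}\tilde w_{j}^{2}=\bm{1}_{v,u}'\hSig_{\hV}\bm{1}_{v,u}$ but says nothing about individual entries of $\hSig_{\hV}^{1/2}$; the trivial bound is only $\max_{j}|w_{ij}|=O(\sqrt{n})$. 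Moreover $\hSig_{\hV}$ is not Toeplitz after the eigenvalue flooring $\hat{d}_{j}=\max(d_{j},\hgamma_{0}n^{-1})$, so the Wiener--L\'evy/Baxter-type summability you invoke for the fixed filters $\Sig^{\pm1/2}$ does not transfer to $\hSig_{\hV}^{\pm1/2}$, and Assumption \ref{assump:weak} supplies no decay rate for $\gamma$ to help. You flag this as ``the hard part,'' but it is not merely hard --- it is avoidable, and the paper avoids it: Lemma \ref{lemma:tightness} bounds $\Es((R^{*}(v)-R^{*}(u))^{2}(R^{*}(w)-R^{*}(v))^{2})$ using only $||A||$ and $\Es({\xi_{j}^{*}}^{4})$ (the crude inequality $\sum_{j}\tilde w_{j}^{4}\leq(\sum_{j}\tilde w_{j}^{2})^{2}$ suffices for tightness), and the finite-dimensional CLT is delegated to the already-established bootstrap CLT of \textcite{mcmurry2010} rather than re-derived from entrywise properties of $\hSig_{\hV}^{1/2}$. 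To close your argument you would need either to actually prove the column-sum bound or to replace the Lindeberg step by one that uses only quadratic forms in $\hSig_{\hV}$.
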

\begin{Theo} \label{theo:consistency}
Suppose Assumption \ref{assump:weak} and Condition \ref{cond:bandwidth} hold. Let $P^{H_{0}}$ be the probability measure corresponding to the null hypothesis. Then 
$$\sup_{x}|\Ps(n(\hphis-1)\leq x)-P^{H_{0}}(n(\hphi-1)\leq x)|=\op,$$
$$\sup_{x}|\Ps(\ts\leq x)-P^{H_{0}}(t\leq x)|=\op.$$
\end{Theo}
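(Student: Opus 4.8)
The plan is to reduce Theorem \ref{theo:consistency} to the bootstrap FCLT of Lemma \ref{lemma:bootstrap FCLT} via the continuous mapping theorem, mirroring on the bootstrap side the argument that, under $H_0$, turns Lemma \ref{lemma:FCLT} into the limiting laws of $n(\hphi-1)$ and $t$. First I would recall the closed form of the OLS estimator in the driftless random walk regression: $n(\hphi - 1) = \big(n^{-2}\sum_{t} Y_{t-1}^2\big)^{-1} n^{-1}\sum_{t} Y_{t-1} V_t$, and the analogous identity for the t-statistic, and write the corresponding bootstrap quantities with $Y_t^\ast = \sum_{s\le t} V_s^\ast$ in place of $Y_t$. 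Using Lemma \ref{lemma:FCLT} under $P^{H_0}$, standard manipulations (summation by parts, $\sum_t V_t^2 = O_p(n)$ from $E V_t^4<\infty$) give $n(\hphi-1) \Rightarrow \frac{\tfrac12(W(1)^2-1)}{\int_0^1 W(u)^2\,du}$ and $t \Rightarrow \frac{\tfrac12(W(1)^2-1)}{(\int_0^1 W(u)^2\,du)^{1/2}}$ under $P^{H_0}$; since these limits have continuous distribution functions, it suffices to show the bootstrap statistics converge to the \emph{same} limits in probability.

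The core step is the bootstrap analogue. From Lemma \ref{lemma:bootstrap FCLT} we have $S^\ast \Rightarrow W$ in probability, i.e. $n^{-1/2}\sigma^{\ast-1}\sum_{t=1}^{\lfloor nu\rfloor} V_t^\ast \Rightarrow W(u)$ in probability. Then $n^{-2}\sigma^{\ast-2}\sum_t (Y_{t-1}^\ast)^2 = \int_0^1 (n^{-1/2}\sigma^{\ast-1} Y^\ast_{\lfloor nu\rfloor})^2\,du + o_{p^\ast}(1)$ converges in probability (in the bootstrap sense) to $\int_0^1 W(u)^2\,du$ by the continuous mapping theorem applied to the functional $f\mapsto \int_0^1 f^2$. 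For the numerator I would use the bootstrap summation-by-parts identity $n^{-1}\sum_t Y_{t-1}^\ast V_t^\ast = \tfrac12\big[(n^{-1/2}Y_n^\ast)^2 - n^{-1}\sum_t (V_t^\ast)^2\big]$; the first term tends to $\tfrac12 \sigma^{\ast 2} W(1)^2$ and, for the second, I need $n^{-1}\sum_t (V_t^\ast)^2 = \sigma^{\ast 2} + o_{p^\ast}(1)$ in probability. Assembling these pieces, the $\sigma^{\ast 2}$ factors cancel in the ratio exactly as the $\sigma^2$ factors cancel in the original statistics, yielding $n(\hphi^\ast-1)\Rightarrow \frac{\tfrac12(W(1)^2-1)}{\int_0^1 W(u)^2\,du}$ in probability, and similarly for $t^\ast$ after dividing by the bootstrap standard error (which requires that the regression-residual variance estimator in Step 5 is consistent for $\sigma^{\ast 2}$ up to $o_{p^\ast}(1)$, itself a consequence of the same FCLT plus $n(\hphi^\ast - 1) = O_{p^\ast}(1)$). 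Convergence in probability of the bootstrap statistic to a constant-distribution limit, combined with continuity of that limiting c.d.f., upgrades to the uniform (Pólya-type) statement $\sup_x|P^\ast(\cdot\le x) - P^{H_0}(\cdot\le x)| = o_p(1)$.

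The main obstacle I anticipate is the quadratic-variation control $n^{-1}\sum_{t=1}^n (V_t^\ast)^2 = \sigma^{\ast 2} + o_{p^\ast}(1)$ in probability: unlike in the original model, $\{V_t^\ast\}$ is a triangular array (post-colored via $\hSig_{\hV}^{1/2}$ from resampled $\ep_t^\ast$), so one cannot invoke an ergodic theorem directly. I would instead compute $\Es(n^{-1}\sum_t (V_t^\ast)^2) = n^{-1}\operatorname{tr}(\hSig_{\hV}) \to \gamma(0)$ (not $\sigma^{\ast 2}$!), and here one must be careful: what actually cancels in the ratio is $\sigma^{\ast 2} = n^{-1}\mathbf 1'\hSig_{\hV}\mathbf 1$, and the summation-by-parts numerator naturally produces $n^{-1}\sum(V_t^\ast)^2$. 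So the delicate point is showing the numerator is asymptotically $\tfrac12(\sigma^{\ast 2}W(1)^2 - \sigma^{\ast 2})$ rather than $\tfrac12(\sigma^{\ast 2}W(1)^2 - \gamma(0))$ — i.e. that the correct centering for the ratio is $S^\ast(1)^2 - 1$ in the scale set by $\sigma^\ast$. This is exactly the bootstrap counterpart of the Phillips–Perron correction, and I expect it to be handled either by a direct second-moment bound (controlling $\Vars(n^{-1}\sum_t (V_t^\ast)^2)$ via the $4$th-moment and summability conditions transferred to the bootstrap world through $\|\hSig_{\hV} - \Sigma\|_{\mathrm{op}} = o_p(n^{-1/4})$ from Condition \ref{cond:bandwidth}), or by showing the contribution of the "wrong" centering term is asymptotically negligible after the $\sigma^{\ast 2}$-normalization because $\sigma^{\ast 2}$ is bounded away from zero in probability. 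Everything else — tightness, the continuous mapping step, and the Pólya upgrade — is routine once Lemma \ref{lemma:bootstrap FCLT} is in hand.
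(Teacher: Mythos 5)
Your overall architecture (bootstrap FCLT $\to$ continuous mapping $\to$ matching of limits $\to$ P\'olya upgrade) is the same as the paper's, but there is a genuine error in what you take the limits to be, and it propagates into the ``delicate point'' you flag at the end. Under Assumption \ref{assump:weak} the noise $\{V_{t}\}$ is serially correlated (the MA case is the whole point of the paper), so $n^{-1}\sum_{t}Y_{t-1}V_{t}=\tfrac12\bigl[(n^{-1/2}Y_{n})^{2}-n^{-1}\sum_{t}V_{t}^{2}\bigr]\Rightarrow\tfrac12\bigl(\sig^{2}W(1)^{2}-\gamma(0)\bigr)$ with $\sig^{2}=\sum_{h}\gamma(h)\neq\gamma(0)$ in general. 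Hence $n(\hphi-1)$ does \emph{not} converge to $\tfrac12(W(1)^{2}-1)/\int_{0}^{1}W^{2}$ as you claim; it converges to the non-pivotal Phillips--Perron distribution $\tfrac12(W(1)^{2}-\gamma(0)/\sig^{2})/\int_{0}^{1}W^{2}$ (Theorem 3.1 of Phillips, 1987), and likewise for $t$. Consequently your stated goal --- proving the bootstrap numerator is centered so as to give $S^{*}(1)^{2}-1$, i.e.\ $\tfrac12({\sigs}^{2}W(1)^{2}-{\sigs}^{2})$ rather than $\tfrac12({\sigs}^{2}W(1)^{2}-\gamma(0))$ --- is exactly backwards: the second expression is the \emph{correct} bootstrap limit and is precisely what matches the original statistic. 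If you tried to prove the first, you would fail whenever $\gamma(0)\neq\sig^{2}$.

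The correct resolution, which is the paper's argument, is that both the conditional and the unconditional statistics converge to PP-type laws indexed by the pair (one-step variance, long-run variance), and one only needs the bootstrap parameters to converge to the true ones: Lemma \ref{lemma:sigma} gives ${\sigs}^{2}\to\sig^{2}$ in probability, and Lemma \ref{lemma:variance} gives $\Es(n^{-1}\sumn{\Vs_{t}}^{2})=\gamma(0)+\op$ together with $\Vars(n^{-1}\sumn{\Vs_{t}}^{2})=\op$, which is exactly the quadratic-variation control you correctly identified as the main technical obstacle (and your proposed second-moment bound via $\|\hSig_{\hV}-\Sig\|$ is in the spirit of the paper's proof of that lemma). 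So the ingredients you assemble are largely the right ones; the flaw is in misidentifying the target distributions as pivotal Dickey--Fuller laws, which would make the bootstrap pointless here and would break the matching step of your proof.
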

To implement Algorithm \ref{algor:matrix}, we choose bandwidth $l$ according to the adaptive bandwidth selection of \textcite{politis2003} in Algorithm \ref{algor:bandwidth}. Lemma \ref{lemma:bandwidth} shows the bandwidth selected by Algorithm \ref{algor:bandwidth} satisfies Condition \ref{cond:bandwidth}. The validity of the bandwidth selection method follows immediately in Theorem \ref{theo:bandwidth}. Notice that when validating the bandwidth selection method, \textcite{politis2003} and \textcite{mcmurry2010} require the autocovariance function $\gamma(h)$ to be either polynomial, exponential, or truncated. In contrary, our assumptions in Theorem \ref{theo:bandwidth} are much more general.
\begin{Algor}[\textcite{politis2003}]\emph{[Selection of the bandwidth]}\label{algor:bandwidth}\item
Let $\hrho_{\hV}(h)=\hgamma_{\hV}(h)/\hgamma_{\hV}(0)$. Select bandwidth $\hl$ as the smallest positive integer satisfying 
$$|\hrho_{\hV}(\hl+k)|<c(\log n)^{1/2}n^{-1/2}, \ k=1,...,K_{n},$$
where $K_{n}$ is a positive, non-decreasing sequence such that $K_{n}=o(\log n)$, and $c$ is a positive constant. 
\end{Algor}

\begin{Lemma}\label{lemma:bandwidth}
Select bandwidth $\hl$ by Algorithm \ref{algor:bandwidth}. Under Assumption \ref{assump:weak} and \ref{assump:strong},
\begin{equation*}
    \hl n^{-1/2}+\sum_{h=\hl+1}^{\infty}|\gamma(h)|=O_{p}(n^{-1/4}).
\end{equation*}
\end{Lemma}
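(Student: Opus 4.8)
The plan is to trap the data-driven $\hl$ between two deterministic sequences and then read off the two pieces of the bound from the polynomial decay of $\gamma$ supplied by Assumption~\ref{assump:strong}. As a preliminary I would replace the residual-based sample autocorrelations $\hrho_{\hV}(h)=\hgamma_{\hV}(h)/\hgamma_{\hV}(0)$ by those built from the true noise, $\tilde\rho(h):=\tilde\gamma(h)/\tilde\gamma(0)$ with $\tilde\gamma(h)=n^{-1}\sum_{t=|h|+1}^{n}V_{t}V_{t-|h|}$. Writing $\hV_{t}=V_{t}-(\hphi-\phi)Y_{t-1}$ and expanding $\hgamma_{\hV}(h)-\tilde\gamma(h)$ into the cross terms $(\hphi-\phi)n^{-1}\sum_{t}(V_{t}Y_{t-1-h}+V_{t-h}Y_{t-1})$ and $(\hphi-\phi)^{2}n^{-1}\sum_{t}Y_{t-1}Y_{t-1-h}$, and using $\hphi-1=O_{p}(n^{-1})$ together with $\sup_{t\le n}|Y_{t}|=O_{p}(n^{1/2})$ when $\phi=1$ (Lemma~\ref{lemma:FCLT}), respectively $\hphi-\phi=O_{p}(n^{-1/2})$ together with stationarity of $\{Y_{t}\}$ when $\phi<1$, one gets
\[
\sup_{1\le h\le n^{1/4}}\bigl|\hgamma_{\hV}(h)-\tilde\gamma(h)\bigr|+\bigl|\hgamma_{\hV}(0)-\tilde\gamma(0)\bigr|=o_{p}\!\bigl((\log n/n)^{1/2}\bigr).
\]
Since $\tilde\gamma(0)\to\gamma(0)>0$ a.s., the same order bound holds for $\sup_{1\le h\le n^{1/4}}|\hrho_{\hV}(h)-\tilde\rho(h)|$, so all threshold comparisons below may be run with $\tilde\rho$ in place of $\hrho_{\hV}$.

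\emph{Upper bound on $\hl$.} Put $L_{n}^{+}=\lceil n^{1/(2\beta)}\rceil$. Because $\beta>2$ we have $(L_{n}^{+})^{-\beta}\asymp n^{-1/2}$, so $|\gamma(h)|=o(h^{-\beta})$ forces $\sup_{L_{n}^{+}-K_{n}<h\le L_{n}^{+}}|\rho(h)|=o(n^{-1/2})=o((\log n/n)^{1/2})$ (recall $K_{n}=o(\log n)$). Inserting a Nagaev-type deviation bound for $\tilde\gamma(h)-\gamma(h)$ into a union bound over just the $K_{n}$ lags of that window---harmless because $K_{n}=o(\log n)$---and combining with the preliminary step shows that, with probability tending to one, $|\hrho_{\hV}(L_{n}^{+}-K_{n}+k)|<c(\log n/n)^{1/2}$ for every $k=1,\dots,K_{n}$; hence $\hl\le L_{n}^{+}$ with probability tending to one, and therefore $\hl n^{-1/2}=o_{p}(n^{-1/4})$ (using $\beta>2$).

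\emph{Tail bound.} Write $T(m)=\sum_{h>m}|\gamma(h)|$. Since $T(m)=o(m^{1-\beta})$, the event $\{T(\hl)>n^{-1/4}\}$ is contained in $\{\hl\le c_{1}n^{1/(4(\beta-1))}\}$ for a suitable constant $c_{1}$, so it suffices to prove $P(T(\hl)>n^{-1/4})\to0$. On $\{\hl=m\}$ the algorithm keeps $\hrho_{\hV}(m+1),\dots,\hrho_{\hV}(m+K_{n})$ below $c(\log n/n)^{1/2}$, which with the preliminary step yields $\sum_{k=1}^{K_{n}}|\tilde\gamma(m+k)|\le 3c\,K_{n}(\log n/n)^{1/2}\gamma(0)$ with probability tending to one. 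Going the other way, if moreover $T(m)>n^{-1/4}$ then $m\le c_{1}n^{1/(4(\beta-1))}$, and either (for bounded $m$) the mass $T(m)>n^{-1/4}$ lies, up to $o(1)$, within the first $K_{n}$ lags once $n$ is large, or (for larger $m$) the monotonicity of $h\mapsto h^{\alpha}T(h)$ from Assumption~\ref{assump:strong} gives $\sum_{k=1}^{K_{n}}|\gamma(m+k)|\ge c(\alpha)\min(1,K_{n}/m)T(m)$; in both cases
\[
\sum_{k=1}^{K_{n}}|\gamma(m+k)|\ \gtrsim\ K_{n}\,n^{-\theta},\qquad\theta:=\tfrac14+\tfrac1{4(\beta-1)}<\tfrac12,
\]
the strict inequality coming from $\beta>2$. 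As $(\log n/n)^{1/2}=o(n^{-\theta})$, subtracting the two displays at $m=\hl$ forces $\max_{1\le h\le 2c_{1}n^{1/(4(\beta-1))}}|\tilde\gamma(h)-\gamma(h)|\gtrsim n^{-\theta}$, so that
\[
P\bigl(T(\hl)>n^{-1/4}\bigr)\ \le\ P\Bigl(\max_{1\le h\le 2c_{1}n^{1/(4(\beta-1))}}\bigl|\tilde\gamma(h)-\gamma(h)\bigr|\gtrsim n^{-\theta}\Bigr)+o(1);
\]
once the leading probability is shown to vanish, $\sum_{h>\hl}|\gamma(h)|=o_{p}(n^{-1/4})$, and together with the previous paragraph this proves the lemma.

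\emph{The main obstacle} is exactly that last estimate: a uniform control of the sample-autocovariance fluctuations over a polynomially growing band of lags (length $\asymp n^{1/(4(\beta-1))}$) at the coarse resolution $n^{-\theta}$, $\theta<\tfrac12$. I would attack it through a Nagaev/Rosenthal-type inequality for partial sums of the physically dependent products $\{V_{t}V_{t-h}\}_{t}$, whose physical-dependence measures are summable and whose $(p/2)$-th moments are finite (with $p/2>2$) under Assumption~\ref{assump:strong}; the sub-Gaussian part of such a bound is negligible here because $n(n^{-\theta})^{2}=n^{1-2\theta}=n^{1/2-1/(2(\beta-1))}\to\infty$, so the delicate term is the heavy-tail one, whose control is precisely where the interplay among the decay exponent $\beta$, the auxiliary exponent $\alpha$, and the moment order $p$ enters---and why the concrete polynomial, exponential, or truncated autocovariance shapes required by \textcite{politis2003} and \textcite{mcmurry2010} are not needed once this step is carried out with care.
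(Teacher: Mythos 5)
Your architecture coincides with the paper's own proof: both arguments (i) transfer the residual-based autocorrelations $\hrho_{\hV}$ to true-noise-based ones via $\hV_{t}=V_{t}-(\hphi-\phi)Y_{t-1}$ and Lemma \ref{lemma:phi}, (ii) control the fluctuations of the sample autocovariances uniformly over the relevant lags, and (iii) use $|\gamma(h)|=o(h^{-\beta})$ together with the eventual monotonicity of $h^{\alpha}\sum_{k>h}|\gamma(k)|$ to show that if $\sum_{h>\hl}|\gamma(h)|>n^{-1/4}$ then the true autocorrelations inside the stopping window would have to be of order at least $n^{-\beta/(4(\beta-1))}$, which dominates the threshold $c(\log n/n)^{1/2}$ precisely because $\beta>2$. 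Your exponent $\theta=\tfrac14+\tfrac1{4(\beta-1)}$ is identical to the paper's $\beta/(4(\beta-1))$ from Lemma \ref{lemma:inverse covariance}; the only real differences are cosmetic (you lower-bound the window sum where the paper lower-bounds the window supremum) plus the welcome fact that you spell out the bound $\hl=O_{p}(n^{1/(2\beta)})$, which the paper dismisses as ``similarly derived.''

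The genuine gap is the step you yourself flag as the main obstacle. Your reduction ends at $P\bigl(\max_{1\le h\le Cn^{1/(4(\beta-1))}}|\tilde\gamma(h)-\gamma(h)|\gtrsim n^{-\theta}\bigr)$, and you propose to kill this with a Nagaev/Rosenthal inequality plus a union bound. Under Assumption \ref{assump:strong} alone this does not obviously close: the products $V_{t}V_{t-h}$ have only $p/2>2$ finite moments, so the heavy-tail term of a Nagaev bound at resolution $n^{-\theta}$ is of order $n^{1-(1-\theta)p/2}$ per lag, and summing over the $\asymp n^{1/(4(\beta-1))}$ admissible lags leaves the exponent $\tfrac1{4(\beta-1)}+1-(1-\theta)\tfrac{p}{2}$, which is positive when $\beta$ is near $2$ and $p$ near $4$ (e.g.\ $\beta=2.01$, $p=4.01$ gives roughly $0.24$). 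The paper avoids exactly this by importing Theorem 1 of \textcite{xiao2011}, a genuine maximal inequality for sample autocovariances over a polynomially growing band of lags at the finer rate $O_{p}((\log n/n)^{1/2})$, rather than a term-by-term union bound; without such a result your proof is incomplete, and it is also what your upper-bound step needs if the algorithm's constant $c$ is small. A secondary point: your inequality $\sum_{k=1}^{K_{n}}|\gamma(m+k)|\ge c(\alpha)\min(1,K_{n}/m)\,T(m)$ tacitly requires $T(m+K_{n})\ge cT(m)$, i.e.\ that the tail sum not collapse by a constant factor across a window of length $K_{n}$; this needs an extra line exploiting $K_{n}=o(\log n)$ on the relevant range of $m$ (the paper's own inequality $\sup_{k\ge a}g_{k}-\sup_{k\ge a+K_{n}}g_{k}\ge D\sup_{k\ge a}g_{k}$ is loose in the same way, so this is a shared blemish rather than a defect unique to your argument).
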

\begin{Theo}\label{theo:bandwidth}
Select bandwidth $\hl$ by Algorithm \ref{algor:bandwidth}. Then under Assumption \ref{assump:weak} and \ref{assump:strong}, the results in Theorem \ref{theo:consistency} hold. 
\end{Theo}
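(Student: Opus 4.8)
The plan is to read Theorem \ref{theo:bandwidth} off Theorem \ref{theo:consistency} by means of Lemma \ref{lemma:bandwidth}, the only substantive issue being that Theorem \ref{theo:consistency} is stated for a prescribed deterministic bandwidth satisfying Condition \ref{cond:bandwidth}, whereas Algorithm \ref{algor:bandwidth} returns the data-dependent $\hl$. First I would observe that the proof of Theorem \ref{theo:consistency} uses Condition \ref{cond:bandwidth} only through its consequence, noted in the Remark after Condition \ref{cond:bandwidth}, that $\|\hSig_{\hV}-\Sig\|=O_{p}(n^{-1/4})$ in operator norm, and through the tightness of $\{S^{*}(u)\}$ and the convergence of ${\sigs}^{2}$ that this bound entails. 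The McMurry--Politis operator-norm estimate has the form $\|\hSig_{\hV}-\Sig\|=O_{p}\bigl(ln^{-1/2}+\sum_{|h|>l}|\gamma(h)|\bigr)$ for any admissible bandwidth $l$, and Lemma \ref{lemma:bandwidth} is designed precisely so that, with $l=\hl$, the bracket is $O_{p}(n^{-1/4})$ under Assumptions \ref{assump:weak}--\ref{assump:strong}. Hence $\|\hSig_{\hV}-\Sig\|=O_{p}(n^{-1/4})$ when $\hl$ is chosen by Algorithm \ref{algor:bandwidth}; that is, $\hl$ satisfies the stochastic analogue of Condition \ref{cond:bandwidth} that the proof of Theorem \ref{theo:consistency} actually needs.

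With this in hand I would transfer the two conclusions. The clean route is to reread the proof of Theorem \ref{theo:consistency} (and of Lemma \ref{lemma:bootstrap FCLT}) and note that every estimate there is a stochastic-order statement that uses the bandwidth only via $\|\hSig_{\hV}-\Sig\|$, never via non-randomness of $l$, so substituting $l=\hl$ is legitimate verbatim and yields the two $\op$ bounds. If one prefers not to reopen that proof, one can argue by truncation: given $\vep>0$, use $r_{n}(\hl)=O_{p}(n^{-1/4})$ to pick $M<\infty$ with $P(\hl\in\mathcal{L}_{n,M})\ge 1-\vep$ for all large $n$, where $\mathcal{L}_{n,M}=\{l\in\mathbb{N}^{+}:r_{n}(l)\le Mn^{-1/4}\}$ is finite with $|\mathcal{L}_{n,M}|\le Mn^{1/4}$ since $r_{n}(l)\ge ln^{-1/2}$; on $\{\hl\in\mathcal{L}_{n,M}\}$ bound the two suprema of Theorem \ref{theo:consistency} by their maxima over $l\in\mathcal{L}_{n,M}$, which are $\op$ by union-bounding over the polynomially many indices the estimates from the proof of Theorem \ref{theo:consistency} with $r_{n}(l)$ replaced by the uniform bound $Mn^{-1/4}$; letting $\vep\downarrow 0$ completes the argument.

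The hard part — everything else being bookkeeping already carried out in Lemmas \ref{lemma:bootstrap FCLT}, \ref{lemma:bandwidth} and Theorem \ref{theo:consistency} — is exactly this passage from a fixed deterministic bandwidth sequence to the random $\hl$. Concretely, one must confirm that the operator-norm consistency of $\hSig_{\hV}$, the tightness of the bootstrap partial-sum process $\{S^{*}(u)\}$, and the convergence ${\sigs}^{2}\to\sig^{2}$ degrade only through the scalar rate $r_{n}(l)$, hence through $Mn^{-1/4}$ on $\mathcal{L}_{n,M}$, so that they survive either the verbatim substitution $l=\hl$ or the uniform-over-$\mathcal{L}_{n,M}$ (equivalently, subsequence) argument above. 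I expect this uniformity check, rather than any genuinely new estimate, to be the crux.
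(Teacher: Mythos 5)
Your overall route is the same as the paper's, but you do substantially more work than the paper does: the paper offers no written proof of Theorem \ref{theo:bandwidth} at all, simply declaring in Section \ref{sec:LPB} that the validity of the bandwidth selection ``follows immediately'' from Lemma \ref{lemma:bandwidth} together with Theorem \ref{theo:consistency}. The issue you isolate --- that Condition \ref{cond:bandwidth} and every estimate in the proofs of Lemma \ref{lemma:matrix}, Lemma \ref{lemma:bootstrap FCLT}, and Theorem \ref{theo:consistency} are stated for a deterministic bandwidth sequence $l=l_{n}$, whereas Algorithm \ref{algor:bandwidth} returns a data-dependent $\hl$ --- is precisely the step the paper elides, and your diagnosis that all the relevant bounds enter only through the scalar rate $r_{n}(l)$ is the right observation for closing it. Of your two fixes, the truncation argument over $\mathcal{L}_{n,M}=\{l: r_{n}(l)\le Mn^{-1/4}\}$ is the more defensible, with one caveat: a union bound over the $O(n^{1/4})$ indices in $\mathcal{L}_{n,M}$ does not follow from the pointwise $\op$ statements alone; you need the underlying estimates (e.g.\ $\sup_{h\le c_{\kappa}l}|\hgamma_{\hV}(h)-\gamma(h)|$ and hence $\|\hSig_{\hV}-\Sig\|$) to hold \emph{uniformly} over $l\in\mathcal{L}_{n,M}$, which does hold here because the tapered estimators for different $l$ are all built from the same uniformly controlled sample autocovariances, but this is exactly the ``uniformity check'' you flag and it should be carried out rather than asserted. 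In short: your proposal is correct in outline and is more honest than the paper, which treats as immediate a step that genuinely requires the argument you sketch.
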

\section{Simulation}\label{sec:simulation}
\subsection{Data Generating Process}

Let $X_{t}-X_{t-1}=\varphi X_{t-1}+V_{t}$. The values of $\varphi$ are set to be $0,-0.02,-0.04,-0.06,-0.08,-0.10$ in order to generate the power curve. Let $\{V_{t}\}$ be generated by Table \ref{table:DGP} below, and $\{\ep_{t}\}\sim iid \ N(0,1)$. 
%
\begin{table}[H]
\centering
\caption{Types of Noises}
\label{table:DGP}
\begin{tabular}{|c|c|c|}
\hline
\multirow{6}{*}{\begin{tabular}[c]{@{}c@{}}Noises\end{tabular}} & iid                     & $V_{t}=\epsilon_{t}$                                                                                           \\ \cline{2-3} 
                                                                      & ma\textsubscript{pos} & $V_{t}=\epsilon_{t}+0.5\epsilon_{t-1}$                                                                         \\ \cline{2-3} 
                                                                      & ma\textsubscript{neg} & $V_{t}=\epsilon_{t}-0.5\epsilon_{t-1}$                                                                         \\ \cline{2-3} 
                                                                      & ar\textsubscript{pos} & $V_{t}=\epsilon_{t}+0.5V_{t-1}$                                                                                \\ \cline{2-3} 
                                                                      & ar\textsubscript{neg} & $V_{t}=\epsilon_{t}-0.5V_{t-1}$                                                                                \\ \cline{2-3} 
                                                                      & arch                    & \begin{tabular}[c]{@{}c@{}}$V_{t}=\sigma_{t}\ep_{t}$, \\ $\sigma_{t}^{2}=10^{-6}+0.25V_{t-1}^{2}$\end{tabular} \\ \hline
\end{tabular}
\end{table}

\subsection{Methods}
In Table \ref{table:method} we list the unit root tests we include in the simulation. In ADF and ARB-ADF we select lag order by Modified Akaike Information Criterion (MAIC) of \textcite{ng2001}. In FPP we harness the flat-top kernel spectral density estimator of \textcite{politis1995}, and choose the kernel bandwidth according to the adaptive bandwidth selection of \textcite{politis2003}. The validity of FPP under Assumption \ref{assump:weak} results from \textcite{shao2007}. In LPB-PP we harness the trapezoid kernel stated in \eqref{eqn:kernel}. In CBB-PP we apply circular block bootstrap of \textcite{politis1992}.  The block size of CBB-PP comes from the automatic block-length selection of \textcite{politis2004}. We conjecture the validity of CBB-PP on the basis of the validity of the block bootstrap PP test. The nominal sizes of all tests are set to be 0.05. Each sample has length 100. In bootstrap methods, 500 bootstrap replicates are generated. To estimate the powers of the tests, 600 tests are conducted. Tests based on both $\hphi$ and its t-statistic $t$ are simulated. Unpublished simulation shows for each of the test listed in Table \ref{table:method}, the version based on $\hphi$ are inferior to the version based on the t-statistic $t$. Therefore, we only report the results of the tests based on the t-statistic $t$. 

\begin{table}[H]
\centering
\caption{Type of Tests}
\label{table:method}
\begin{tabular}{|l|l|}
\hline
ADF     & ADF test                         \\ \hline
ARB-ADF & AR-sieve Bootstrap ADF test      \\ \hline
FPP     & Flat-top pivoted PP test         \\ \hline
LPB-PP  & Linear Process Bootstrap PP test \\ \hline
CBB-PP  & Circular Block Bootstrap PP test \\ \hline
\end{tabular}
\end{table}
\subsection{Results}
The results in Table \ref{table:result} separate the tests into two categories. ADF and ARB-ADF, as parametric tests, show better empirical sizes, particularly under negative moving average noise. On the other hand, the nonparametric tests, i.e., FPP, LPB-PP, and CBB-PP, attain higher powers. These high powers of the nonparametric tests not only stand out under conditional heteroscedastic noise, but also occur in other cases, e.g., when positive moving average noise occurs. \\
\\
Now we focus on nonparametric tests, i.e., FPP, LPB-PP, and CBB-PP. First,  recall that FPP estimates the spectral density with flat-top kernel and adaptive bandwidth selection. We found this estimation leads to much better empirical size, compared to other popular kernel-based spectral density estimations. See \textcite{kim1990, perron1996} for  evidences under the same or similar simulation settings. \\
\\
Second, among FPP, LPB-PP, and CBB-PP, our LPB-PP achieves the best overall performance in empirical sizes and powers. More specifically, while these three tests have almost equally high powers, the LPB-PP distorts the size less under two of the least favorable noises, i.e., negative moving average noise and positive moving average noise. However, LPB-PP does not fully eradicate the size distortion problem. 

\begin{table}[H]
\centering
\caption{Sizes and (unadjusted) Powers}
\label{table:result}
\begin{tabular}{|c|c|c|c|c|c|c|c|}
\hline
                                & $\varphi$  & iid   & ma\textsubscript{pos} & ma\textsubscript{neg} & ar\textsubscript{pos} & ar\textsubscript{neg} & arch \\ \hline
\multirow{6}{*}{ADF}   & 0.00  & 0.060 & 0.048      & 0.088      & 0.048      & 0.047      & 0.020   \\ \cline{2-8} 
                         & -0.02 & 0.150 & 0.133      & 0.207      & 0.140      & 0.130      & 0.052   \\ \cline{2-8} 
                         & -0.04 & 0.292 & 0.275      & 0.300      & 0.273      & 0.273      & 0.098   \\ \cline{2-8} 
                         & -0.06 & 0.448 & 0.398      & 0.415      & 0.425      & 0.415      & 0.168   \\ \cline{2-8} 
                         & -0.08 & 0.585 & 0.517      & 0.515      & 0.518      & 0.523      & 0.230   \\ \cline{2-8} 
                         & -0.10 & 0.733 & 0.600      & 0.593      & 0.642      & 0.610      & 0.333   \\ \hline
\multirow{6}{*}{ARB-ADF} & 0.00  & 0.057 & 0.050      & 0.065      & 0.043      & 0.050      & 0.045   \\ \cline{2-8} 
                         & -0.02 & 0.153 & 0.157      & 0.182      & 0.158      & 0.118      & 0.127   \\ \cline{2-8} 
                         & -0.04 & 0.248 & 0.245      & 0.290      & 0.268      & 0.258      & 0.270   \\ \cline{2-8} 
                         & -0.06 & 0.428 & 0.332      & 0.428      & 0.372      & 0.388      & 0.340   \\ \cline{2-8} 
                         & -0.08 & 0.533 & 0.475      & 0.495      & 0.505      & 0.498      & 0.400   \\ \cline{2-8} 
                         & -0.10 & 0.673 & 0.557      & 0.530      & 0.582      & 0.613      & 0.512   \\ \hline
\multirow{6}{*}{FPP}     & 0.00  & 0.048 & 0.037      & 0.272      & 0.020      & 0.157      & 0.043   \\ \cline{2-8} 
                         & -0.02 & 0.120 & 0.142      & 0.455      & 0.122      & 0.333      & 0.160   \\ \cline{2-8} 
                         & -0.04 & 0.277 & 0.288      & 0.747      & 0.153      & 0.522      & 0.285   \\ \cline{2-8} 
                         & -0.06 & 0.447 & 0.380      & 0.892      & 0.253      & 0.697      & 0.455   \\ \cline{2-8} 
                         & -0.08 & 0.638 & 0.508      & 0.965      & 0.320      & 0.850      & 0.610   \\ \cline{2-8} 
                         & -0.10 & 0.787 & 0.675      & 1.000      & 0.433      & 0.930      & 0.773   \\ \hline
\multirow{6}{*}{LPB-PP}  & 0.00  & 0.057 & 0.048      & 0.188      & 0.022      & 0.098      & 0.048   \\ \cline{2-8} 
                         & -0.02 & 0.152 & 0.218      & 0.392      & 0.200      & 0.192      & 0.143   \\ \cline{2-8} 
                         & -0.04 & 0.280 & 0.292      & 0.602      & 0.268      & 0.362      & 0.297   \\ \cline{2-8} 
                         & -0.06 & 0.463 & 0.452      & 0.845      & 0.338      & 0.567      & 0.433   \\ \cline{2-8} 
                         & -0.08 & 0.632 & 0.577      & 0.915      & 0.410      & 0.753      & 0.638   \\ \cline{2-8} 
                         & -0.10 & 0.763 & 0.660      & 0.967      & 0.500      & 0.853      & 0.770   \\ \hline
\multirow{6}{*}{CBB-PP}  & 0.00  & 0.042 & 0.025      & 0.247      & 0.022      & 0.142      & 0.060   \\ \cline{2-8} 
                         & -0.02 & 0.147 & 0.165      & 0.417      & 0.212      & 0.313      & 0.150   \\ \cline{2-8} 
                         & -0.04 & 0.278 & 0.238      & 0.742      & 0.292      & 0.562      & 0.268   \\ \cline{2-8} 
                         & -0.06 & 0.437 & 0.352      & 0.938      & 0.322      & 0.790      & 0.415   \\ \cline{2-8} 
                         & -0.08 & 0.643 & 0.468      & 0.973      & 0.423      & 0.912      & 0.640   \\ \cline{2-8} 
                         & -0.10 & 0.782 & 0.592      & 0.992      & 0.507      & 0.965      & 0.783   \\ \hline
\end{tabular}
\end{table}

\section{Conclusion}
We proposes LPB unit root test to sooth the size distortion of unit root test, in particular the PP test, with MA noises. Via a bootstrap functional central limit theorem, the validity of LPB unit root test is established under general assumptions which allow a large family of non-linear noises. Simulation shows LPB unit root test mitigates the size distortion of the PP test under moving average noises, while preserving its high powers.\\
\\
Hence, LPB unit root test stands out as competitive alternative in testing unit root. Further study will be needed to compare the empirical and the (local) asymptotic efficiency of LPB unit root test and other variants of unit root tests, e.g., the modified test of \textcite{perron1996}.

\section{Appendix}
We first introduce some extra notations. Let $||\cdot||_{p}$ be the $L^{p}$ (induced) norm of vectors (or matrices). Let $||\cdot||$ be $||\cdot||_{2}$. Let $tr(\cdot)$ be the trace of matrices. Let $\bm{1}_{u}$ be a $n$-dimensional column vector with first $\lnul$ entries one and the other entries zero. Let $\bm{1}_{u,v}=\bm{1}_{v}-\bm{1}_{u}$. Define $\hgamma_{V}$, $\hrho_{V}$, and $\hSig_{V}$ analogously as $\hgamma_{\hV}$, $\hrho_{\hV}$, and $\hSig_{\hV}$.
\begin{proof}[Proof of Lemma \ref{lemma:bootstrap FCLT}]
Let $M^{*}$ be a random matrix with rows independently and uniformly selected from the standard basis vectors, e.g., $(1,0,...,0)$, and $\bep^{*}=M^{*}\cbep$. Let  $\hbe=(\he_{1},...,\he_{n})'$,
$\cbe=(\ce_{1},...,\ce_{n})'$,
$\cbvep=(\cvep_{1},...,\cvep_{n})'$,
$\bes=(e^{*}_{1},...,e^{*}_{n})'$, $\bveps=(\vep^{*}_{1},...,\vep^{*}_{n})'$. Let $\bar{\he}_{t}=n^{-1}\sumn {\he}_{t}$, $\hsig_{\he}^{2}=n^{-1}\sumn(\he_{t}-\bar{\he}_{t})^{2}.$ Let $e_{t}^{*}$ and $\vep_{t}^{*}$ be generated with the true autocovariance matrix $\Sig$, as follows:\\
\\
Step 1: $\hbe=\Sig^{-1/2}\cbV$. \\
Step 2: $\ce_{t}=(\he_{t}-\bar{\he}_{t})/\hsig_{\he}$, $\cvep_{t}=(\he_{t}-\bar{\he}_{t})/\hsig_{\hep}$ \\
Step 3: $\be^{*}=M^{*}\cbe$, $\bvep^{*}=M^{*}\cbvep$. \\
\\
Notice $S^{*}(u)=R_{1}^{*}(u)+R_{2}^{*}(u)+R_{3}^{*}(u)$, where
$$R_{1}^{*}(u)={\sigs}^{-1}n^{-1/2}\bm{1}_{u}'\hSig_{\hV}^{1/2}\bes, \  R_{2}^{*}(u)={\sigs}^{-1}n^{-1/2}\bm{1}_{u}'\hSig_{\hV}^{1/2}(\bveps-\bes), \ \text{and} \ R_{3}^{*}(u)={\sigs}^{-1}n^{-1/2}\bm{1}_{u}'\hSig_{\hV}^{1/2}(\beps-\bveps).$$
To show $R_{j}^{*}(u)$ converges, that is, $R_{1}^{*}\Rightarrow W$ in probability, $R_{2}^{*}\Rightarrow 0$ in probability, and $R_{3}^{*}\Rightarrow 0$ in probability, it suffices to prove the in-probability finite-dimensional convergence and the in-probability tightness of $R_{j}^{*}(u)$. With Lemma \ref{lemma:covariance} below, the proof of Theorem 5 of \textcite{mcmurry2010}, and the Cramer-Wold device, it is straightforward to show the finite-dimensional convergence of $R_{1}^{*}(u)$. The finite-dimensional convergence of $R_{2}^{*}(u)$ and $R_{3}^{*}(u)$ can be proven similarly. To establish the in-probability tightness of $R_{j}^{*}(u)$, we apply Theorem 13.5 of \textcite{billingsley1999} (p. 142), and verify its conditions with Lemma \ref{lemma:matrix}, \ref{lemma:sigma}, \ref{lemma:tightness}, and \ref{lemma:fourth term} below. The in-probability convergence of $S^{*}(u)$ follows from Slutsky's Theorem on a metric space (see e.g. \textcite{billingsley1999}, Theorem 3.1, p. 27).
\end{proof}
\begin{proof}[Proof of Theorem \ref{theo:consistency}]
By Lemma \ref{lemma:FCLT}, \ref{lemma:bootstrap FCLT}, and Lemma \ref{lemma:variance}, the conditional asymptotic distributions of $n(\hphis-1)$ and the t-statistic $\ts$ are both standard Phillips-Perron type distributions; see Theorem 3.1 of \textcite{phillips1987}. So do the unconditional asymptotic distributions of $n(\hphi-1)$ and the t-statistic $t$. Further, by Lemma \ref{lemma:sigma} and \ref{lemma:variance}, the parameters in the conditional asymptotic distributions converge to the parameters in the unconditional asymptotic distributions. The theorem follows. 
\end{proof}
\begin{Lemma}\label{lemma:phi}
Under Assumption \ref{assump:weak},
$$\hphi-\phi=\begin{cases}
O_{p}(n^{-1/2}),      & \text{if } \phi < 1, \\
O_{p}(n^{-1}),  & \text{if } \phi=1.
\end{cases}$$ 
\end{Lemma}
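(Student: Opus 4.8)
The plan is to analyze the OLS estimator $\hphi$ directly via its closed form $\hphi - \phi = \left(\sum_{t=2}^{n} Y_{t-1} V_t\right)\big/\left(\sum_{t=2}^{n} Y_{t-1}^2\right)$, treating the stationary case $\phi<1$ and the unit root case $\phi=1$ separately, since the stochastic orders of the numerator and denominator differ between the two regimes. In both cases the strategy is to establish the rate of the denominator from below (in probability) and the rate of the numerator from above, then take the ratio.

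For the stationary case $\phi<1$, first I would note that under Assumption \ref{assump:weak} the process $\{Y_t\}$ generated by \eqref{eqn:model} for all $t\in\mathbb{Z}$ is itself a well-defined physical-dependent stationary sequence with finite fourth moment (its physical dependence measure is controlled by a geometric convolution of $\delta_4(\cdot)$ for $\{V_t\}$, hence still summable), so a law of large numbers gives $n^{-1}\sum_{t=2}^{n} Y_{t-1}^2 \to E(Y_0^2) =: c_Y > 0$ in probability. For the numerator, $n^{-1/2}\sum_{t=2}^{n} Y_{t-1} V_t$ is $\Op$: one can either invoke a martingale-type / physical-dependence CLT for the stationary sequence $\{Y_{t-1}V_t\}$ (which has summable dependence measure and finite second moment because $E(Y_{t-1}^2 V_t^2) \le (E Y_{t-1}^4)^{1/2}(E V_t^4)^{1/2} < \infty$), or simply bound its variance directly using $\sum_h |\gamma(h)| < \infty$ and the moving-average representation of $Y_t$ in terms of $\{V_s\}_{s\le t}$. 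Dividing, $\hphi - \phi = \Op \cdot n^{-1/2} / (c_Y + \op) = O_p(n^{-1/2})$.

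For the unit root case $\phi=1$ with $Y_0=0$, I would appeal directly to the FCLT already available in Lemma \ref{lemma:FCLT}: with $S(u) = n^{-1/2}\sig^{-1}\sum_{t=1}^{\lfloor nu\rfloor} V_t \Rightarrow W(u)$, the continuous mapping theorem yields $n^{-2}\sum_{t=2}^{n} Y_{t-1}^2 = n^{-1}\sum_{t=2}^n (n^{-1/2} Y_{t-1})^2 \Rightarrow \sig^2 \int_0^1 W(u)^2\,du$, a strictly positive (a.s.) limit, and $n^{-1}\sum_{t=2}^{n} Y_{t-1} V_t \Rightarrow \tfrac12 \sig^2 (W(1)^2 - \eta)$ for the appropriate constant $\eta$ (the standard Phillips-type limit; this uses $Y_{t-1}V_t = \tfrac12(Y_t^2 - Y_{t-1}^2 - V_t^2)$ together with the FCLT and $n^{-1}\sum V_t^2 \to E(V_0^2)$). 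Hence $n(\hphi - 1) = \big(n^{-1}\sum Y_{t-1}V_t\big)\big/\big(n^{-2}\sum Y_{t-1}^2\big) \Rightarrow$ a nondegenerate limit, so $\hphi - 1 = O_p(n^{-1})$.

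The main obstacle is the unit root numerator: establishing that $n^{-1}\sum_{t=2}^n Y_{t-1} V_t$ converges (rather than merely being $\Op$) requires the joint convergence of $(n^{-1/2}Y_n, n^{-1}\sum V_t^2)$ with the partial-sum process, which in turn relies on the FCLT of Lemma \ref{lemma:FCLT} plus a variance computation showing the quadratic-variation correction $n^{-1}\sum V_t^2 - E(V_0^2) = \op$; the cross term and the identification of the constant $\eta$ in terms of $\gamma(0)$ and $\sig^2$ is where the physical-dependence bookkeeping is genuinely needed. Everything else — the stationary LLN, the stationary variance bound, and the two applications of the continuous mapping theorem — is routine given the assumptions and the already-cited FCLT.
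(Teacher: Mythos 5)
Your argument is correct, but it is a genuinely different route from the paper's: the paper disposes of Lemma \ref{lemma:phi} in one line by citing Theorem 3 of \textcite{wu2005}, which already contains the convergence (with limit distributions) of the normalized OLS estimator for physical-dependent noise in both the stationary and the unit root regimes, whereas you rebuild the standard Phillips-type asymptotics from scratch. Your reconstruction is sound: in the stationary case the key facts are that $E(Y_{t-1}V_{t})=0$ (which holds here by the very definition of $\phi$ as the prediction coefficient, so the numerator is centered) and that $\{Y_{t-1}V_{t}\}$ inherits a summable physical dependence measure from $\sum_{t}\delta_{4}(t)<\infty$ and $E(V_{t}^{4})<\infty$, giving the $O_{p}(n^{1/2})$ numerator bound --- note that $\sum_{h}|\gamma(h)|<\infty$ alone would not suffice for the variance of $\sum Y_{t-1}V_{t}$, since that variance involves fourth-order cumulants of $\{V_{t}\}$, so the fourth-moment physical-dependence condition is genuinely what you must invoke. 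In the unit root case your use of Lemma \ref{lemma:FCLT}, the identity $Y_{t-1}V_{t}=\tfrac{1}{2}(Y_{t}^{2}-Y_{t-1}^{2}-V_{t}^{2})$, and the continuous mapping theorem is exactly the classical derivation; in fact you prove more than the lemma asks, since for the $O_{p}(n^{-1})$ rate you only need tightness of $n^{-1}\sum Y_{t-1}V_{t}$ and a lower bound in probability on $n^{-2}\sum Y_{t-1}^{2}$, not the identification of the limit law. What your approach buys is a self-contained proof that makes visible exactly which parts of Assumption \ref{assump:weak} are used where; what the paper's citation buys is brevity and, implicitly, the full limit distributions that are reused later in the proof of Theorem \ref{theo:consistency}.
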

\begin{proof}[Proof of Lemma \ref{lemma:phi}]
This result follows straightforwardly from Theorem 3 of \textcite{wu2005}.
\end{proof}
\begin{Lemma}\label{lemma:matrix}
Under Assumption \ref{assump:weak} and Condition \ref{cond:bandwidth},
$$||\hSig_{\hV}-\Sig||=O_{p}(n^{-1/4}),\text{ and }||\hSig_{\hV}^{-1}-\Sig^{-1}||=O_{p}(n^{-1/4}).$$
\end{Lemma}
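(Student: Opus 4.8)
\emph{Setup and Step 1 (true noise).} The plan is to reach the bound for $\hSig_{\hV}$ in three steps — the tapered matrix built from the true noise, then from the OLS residuals, then the eigenvalue flooring — and to obtain the bound for the inverse from a resolvent identity. The recurring tool is that $\tSig_V:=[\kappa_l(i-j)\hgamma_V(i-j)]_{i,j=1}^{n}$, its entrywise expectation, $\tSig_{\hV}$, and $\Sig$ are all Toeplitz and banded with half-width at most $c_\kappa l$, so their operator norms obey $||[a(i-j)]_{i,j=1}^{n}||\le\sum_{|h|\le n-1}|a(h)|$. First I would show $||\tSig_V-\Sig||=O_p(n^{-1/4})$, following \textcite{mcmurry2010} but under Assumption \ref{assump:weak}. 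Write $\tSig_V-\Sig=(\tSig_V-E\tSig_V)+(E\tSig_V-\Sig)$. For the stochastic part, $||\tSig_V-E\tSig_V||\le\sum_{|h|\le c_\kappa l}|\hgamma_V(h)-E\hgamma_V(h)|$, and since $EV_t^4<\infty$ and $\sum_t\delta_4(t)<\infty$ force $\mathrm{Var}(\hgamma_V(h))=O(n^{-1})$ uniformly in $h$, Markov's inequality bounds this by $O_p(ln^{-1/2})$. For the deterministic part, $E\hgamma_V(h)=(1-|h|/n)\gamma(h)$, so $||E\tSig_V-\Sig||\le n^{-1}\sum_h|h|\,|\gamma(h)|+3\sum_{|h|>l}|\gamma(h)|$, the first sum being finite by Assumption \ref{assump:weak}. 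Both pieces are $O(r_n)=O(n^{-1/4})$ by Condition \ref{cond:bandwidth}.

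\emph{Step 2 (residuals).} Next I would transfer to $\hSig_{\hV}$. Using $\hV_t=V_t+(\phi-\hphi)Y_{t-1}$ and setting $a:=\phi-\hphi$,
$$\hgamma_{\hV}(h)-\hgamma_V(h)=a\,n^{-1}\sum_{t=|h|+1}^{n}\bigl(V_tY_{t-|h|-1}+V_{t-|h|}Y_{t-1}\bigr)+a^2n^{-1}\sum_{t=|h|+1}^{n}Y_{t-1}Y_{t-|h|-1},$$
and Cauchy--Schwarz bounds this, uniformly in $h$, by $2|a|A_n+a^2B_n$ with $A_n=n^{-1}\bigl(\sum_tV_t^2\bigr)^{1/2}\bigl(\sum_sY_s^2\bigr)^{1/2}$ and $B_n=n^{-1}\sum_sY_s^2$. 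Since $\sum_tV_t^2=O_p(n)$, and $\sum_sY_s^2=O_p(n)$ when $\phi<1$ while $\sum_sY_s^2=O_p(n^2)$ when $\phi=1$, combined with the matching rates $a=O_p(n^{-1/2})$, resp.\ $O_p(n^{-1})$, from Lemma \ref{lemma:phi}, a short case check gives $\max_{|h|\le c_\kappa l}|\hgamma_{\hV}(h)-\hgamma_V(h)|=O_p(n^{-1/2})$ in both regimes. The Toeplitz bound then yields $||\tSig_{\hV}-\tSig_V||\le\sum_{|h|\le c_\kappa l}|\hgamma_{\hV}(h)-\hgamma_V(h)|=O_p(ln^{-1/2})=O_p(n^{-1/4})$, hence with Step 1 $||\tSig_{\hV}-\Sig||=O_p(n^{-1/4})$.

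\emph{Step 3 (flooring) and the inverse.} Invertibility of $\{V_t\}$ makes its spectral density bounded away from $0$, so $\lambda_{\min}(\Sig)\ge c_0>0$ uniformly in $n$ by standard bounds on eigenvalues of Toeplitz covariance matrices, whence $||\Sig^{-1}||\le c_0^{-1}$. Since $||\tSig_{\hV}-\Sig||=o_p(1)$, Weyl's inequality gives $\lambda_{\min}(\tSig_{\hV})\ge c_0/2$ with probability tending to $1$; as $\hgamma_0 n^{-1}=o_p(1)$, the flooring in Algorithm \ref{algor:matrix} is then inactive and $\hSig_{\hV}=\tSig_{\hV}$ with probability tending to $1$, which already gives $||\hSig_{\hV}-\Sig||=O_p(n^{-1/4})$ and $||\hSig_{\hV}^{-1}||\le 2c_0^{-1}$ with probability tending to $1$. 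The resolvent identity $\hSig_{\hV}^{-1}-\Sig^{-1}=-\hSig_{\hV}^{-1}(\hSig_{\hV}-\Sig)\Sig^{-1}$ then gives $||\hSig_{\hV}^{-1}-\Sig^{-1}||\le ||\hSig_{\hV}^{-1}||\,||\hSig_{\hV}-\Sig||\,||\Sig^{-1}||=O_p(n^{-1/4})$.

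\emph{The main difficulty.} The delicate point is the uniformity in $h$ needed in Steps 1--2: one accumulates $\asymp l$ errors each of order $n^{-1/2}$, so their sum is $O_p(ln^{-1/2})$, which is $O_p(n^{-1/4})$ only because of the rate in Condition \ref{cond:bandwidth}; making $\mathrm{Var}(\hgamma_V(h))$ and the residual correction $\hgamma_{\hV}(h)-\hgamma_V(h)$ uniformly of the right order over the growing band $|h|\le c_\kappa l$ is exactly where the fourth-moment and physical-dependence parts of Assumption \ref{assump:weak} are spent. A secondary but essential subtlety is that the bound for the inverse genuinely needs $\Sig$ uniformly well-conditioned from below — the flooring alone only yields $||\hSig_{\hV}^{-1}||=O_p(n)$ — which is why the invertibility hypothesis on $\{V_t\}$ is invoked.
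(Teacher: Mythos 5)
Your proof is correct, and its skeleton matches the paper's: reduce to $\|\tSig_{\hV}-\Sig\|=O_{p}(r_{n})$, split this into the true-noise part $\|\tSig_{V}-\Sig\|$ and the residual-correction part $\|\tSig_{\hV}-\tSig_{V}\|$, and bound the latter through the $L^{1}$/Toeplitz norm and the expansion of $\hgamma_{\hV}(h)-\hgamma_{V}(h)$ in powers of $\hphi-\phi$. The genuine difference is in how the cross terms are controlled. The paper bounds the three suprema $C_{1},C_{2},C_{3}$ directly, using a summation-by-parts identity for $\sum_{t}Y_{t-h-1}V_{t-h}$ and a separate estimate for the double sum $\sum_{t}\sum_{k}V_{t-k}V_{t-h}$ in the unit-root case, which yields the sharper rate $O_{p}(l^{2}n^{-1}+l^{4}n^{-2})$ under the null; you instead apply Cauchy--Schwarz once, getting a bound $2|a|A_{n}+a^{2}B_{n}$ that is uniform in $h$ for free and gives only $O_{p}(ln^{-1/2})$. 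Your bound is weaker but still $O_{p}(r_{n})=O_{p}(n^{-1/4})$ under Condition \ref{cond:bandwidth}, so nothing is lost for the lemma as stated; the paper's finer computation would only matter if one wanted to relax the bandwidth condition. A second, smaller difference is that you spell out the two steps the paper delegates to citations: the bias--variance analysis of $\tSig_{V}$ (Theorem 1 of \textcite{mcmurry2010}) and the eigenvalue-flooring/resolvent argument for the inverse (the proofs of their Theorems 2 and 3). Your explicit treatment is fine; just note that the lower bound $\lambda_{\min}(\Sig)\geq c_{0}>0$ rests on the invertibility of $\{V_{t}\}$ assumed in the paper's prose (equivalently, a spectral density bounded away from zero), not on $\sum_{h}\gamma(h)>0$ alone --- the same hypothesis the paper implicitly uses when it invokes those theorems, so this is not a gap, merely worth stating.
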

\begin{proof}[Proof of Lemma \ref{lemma:matrix}]
By the proof of Theorem 2 and 3 of \textcite{mcmurry2010}, it suffices to prove $||\tSig_{\hV}-\Sig||=O_{p}(r_{n}).$ By Theorem 1 of \textcite{mcmurry2010},  $||\tSig_{V}-\Sig||=O_{p}(r_{n}),$ so it suffices to prove $|\tSig_{\hV}-\tSig_{V}||=O_{p}(r_{n}),$ where $\tSig_{\hV}$ is defined in Algorithm \ref{algor:matrix} and $r_{n}$ in Condition \ref{cond:bandwidth}. By H\"older's inequality and the symmetry of $\hSig_{\hV}$ and $\hSig_{V}$, 
\begin{align*}
&\mathrel{\phantom{=}}||\tSig_{\hV}-\tSig_{V}||\leq||\tSig_{\hV}-\tSig_{V}||_{1}\leq 2\sum_{h=0}^{\lfloor c_{\kappa}l \rfloor}|\hgamma_{\hV}(h)-\hgamma_{V}(h)|\\
&\leq 2(l'+1)(|\hphi-\phi|(C_{1}+C_{2})+(\hphi-\phi)^{2}C_{3}), 
\end{align*}
where $l'=\lfloor c_{\kappa}l \rfloor$, and
\begin{align*}
C_{1}&=\sup_{0\leq h\leq l'}|n^{-1}\sum_{t=h+1}^{n}Y_{t-1}V_{t-h}|,\\
C_{2}&=\sup_{0\leq h\leq l'}|n^{-1}\sum_{t=h+1}^{n}Y_{t-h-1}V_{t}|,\\
C_{3}&=\sup_{0\leq h\leq l'}|n^{-1}\sum_{t=h+1}^{n}Y_{t-h-1}Y_{t-1}|.
\end{align*}
When $\phi< 1$, by Theorem 1 of \textcite{hannan1974}, $$C_{1}=\sup_{0\leq h\leq l'}|n^{-1}\sum_{t=h+1}^{n}Y_{t-1}(Y_{t-h}-\phi Y_{t-h-1})|=\Op.$$
When $\phi=1$, 
\begin{align*}
C_{1}&=\sup_{0\leq h\leq l'} |n^{-1}\sum_{t=h+1}^{n}Y_{t-h-1}V_{t-h}+n^{-1}\sum_{t=h+1}^{n}\sum_{k=1}^{h}V_{t-k}V_{t-h}|\\
&=\sup_{0\leq h\leq l'} |(2n)^{-1}(\sum_{t=h+1}^{n}(Y_{t-h-1}+V_{t-h})^{2}-Y_{t-h-1}^{2}-V_{t-h}^{2})+n^{-1}\sum_{t=h+1}^{n}\sum_{k=1}^{h}V_{t-k}V_{t-h}|\\
&=\sup_{0\leq h\leq l'} |(2n)^{-1}Y_{n-h}^{2}-(2n)^{-1}\sum_{t=h+1}^{n}V_{t-h}^{2}+n^{-1}\sum_{t=h+1}^{n}\sum_{k=1}^{h}V_{t-k}V_{t-h}|\\
&=\Op+\Op+O_{p}(l+l^{3}n^{-1})=O_{p}(l+l^{3}n^{-1}),
\end{align*}
since $$\sup_{0\leq h\leq l'} n^{-1}Y_{n-h}^{2}\leq (\sup_{0 \leq u\leq 1}n^{-1/2}Y_{\lfloor nu \rfloor})^{2}=\Op,\ \sup_{0\leq h\leq l'} n^{-1}\sum_{t=h+1}^{n}V_{t-h}^{2}=n^{-1}\sumn V_{t}^{2}=\Op,$$
and
\begin{align*}
&\mathrel{\phantom{=}}\sup_{0\leq h\leq l'}|n^{-1}\sum_{t=h+1}^{n}\sum_{k=1}^{h}V_{t-k}V_{t-h}|\leq(l'+1)\sup_{0\leq h\leq l'}\sup_{0<k\leq h}|n^{-1}\sum_{t=h+1}^{n}V_{t-k}V_{t-h}|\\
&\leq (l'+1)(\sup_{j\geq 0, k>0, j+k\leq l'}|n^{-1}\sum_{t=n-k+1}^{n}V_{t}V_{t-j}|+\sup_{0\leq j\leq l'}|n^{-1}\sum_{t=j+1}^{n}V_{t}V_{t-j}|)\\
&\leq (l'+1)^{2}(2n)^{-1} \sup_{n-l<t\leq n, 0\leq j<l}(V_{t}^{2}+V_{t-j}^{2})+O_{p}(l)=O_{p}(l^{3}n^{-1}+l).  
\end{align*}
Similarly, it can be shown that when $\phi< 1$, $C_{2}=\Op$, and $C_{3}=\Op,$ and when $\phi=1$, $C_{2}=O_{p}(l+l^{3}n^{-1})$, and $C_{3}=O_{p}(n+l^{2}+l^{4}n^{-1})$. By Lemma \ref{lemma:phi},
$$||\tSig_{\hV}-\tSig_{V}||=\begin{cases}
O_{p}(ln^{-1/2}),      & \text{if } \phi < 1, \\
O_{p}(l^{2}n^{-1}+l^{4}n^{-2}),  & \text{if } \phi=1. \\
\end{cases}$$ 
\end{proof}
\begin{Lemma}\label{lemma:sigma} 
Under Assumption \ref{assump:weak} and Condition \ref{cond:bandwidth},
$$\Vars (n^{-1/2}\sumn \Vs_{t})-Var(n^{-1/2}\sumn V_{t})=\op.$$
\end{Lemma}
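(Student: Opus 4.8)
The plan is to compute $\Vars(n^{-1/2}\sumn \Vs_t)$ explicitly as a quadratic form in $\hSig_{\hV}$ and compare it with the analogous quadratic form in $\Sig$, whose limit is the long-run variance $\sig^2=\lim Var(n^{-1/2}\sumn V_t)$. Recall from Step 4 of Algorithm \ref{algor:LPB} that $\bVs=\hSig_{\hV}^{1/2}\beps$ with $\beps=M^{*}\cbep$, where the rows of $M^{*}$ are i.i.d.\ uniform on the standard basis vectors and $\cbep$ is the standardized, centered pre-whitened residual vector. Since the entries of $\cbep$ satisfy $n^{-1}\sumn\cep_t=0$ and $n^{-1}\sumn\cep_t^2=1$ by construction, a direct calculation gives $\Es(\beps)=0$ and $\Covs(\beps)=(I_n-n^{-1}\bm{1}\bm{1}')$ where $\bm{1}=(1,\dots,1)'$ — i.e.\ sampling with replacement from a centered, scaled collection produces identity covariance up to the rank-one mean-correction term. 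Hence, writing $\bm{1}_1=(1,\dots,1)'$ (the all-ones vector, which in the paper's notation is $\bm{1}_u$ at $u=1$),
$$\Vars\!\left(n^{-1/2}\sumn \Vs_t\right)=n^{-1}\bm{1}_1'\hSig_{\hV}^{1/2}\Covs(\beps)\hSig_{\hV}^{{1/2}'}\bm{1}_1=n^{-1}\bm{1}_1'\hSig_{\hV}\bm{1}_1-n^{-2}\big(\bm{1}_1'\hSig_{\hV}^{1/2}\bm{1}_1\big)^2.$$

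First I would dispatch the rank-one correction term as negligible: $n^{-1}\bm{1}_1'\hSig_{\hV}\bm{1}_1=O_p(1)$ (it converges to $\sig^2$, shown next), and $\|\hSig_{\hV}^{1/2}\|^2=\|\hSig_{\hV}\|=O_p(1)$ by Lemma \ref{lemma:matrix} together with $\|\Sig\|<\infty$ (which holds since $\sum_h|\gamma(h)|<\infty$ under Assumption \ref{assump:weak}), so $|n^{-2}(\bm{1}_1'\hSig_{\hV}^{1/2}\bm{1}_1)^2|\le n^{-2}\|\bm{1}_1\|^2\|\hSig_{\hV}^{1/2}\bm{1}_1\|^2\le n^{-2}\cdot n\cdot\|\hSig_{\hV}\|\cdot n=O_p(1)\cdot n^{-0}$ — wait, that bound is only $O_p(1)$, so I need the sharper observation that $\bm{1}_1'\hSig_{\hV}^{1/2}\bm{1}_1=\bm{1}_1'\hSig_{\hV}^{1/2}\bm{1}_1$ is itself $O_p(n^{1/2})$ because $\hSig_{\hV}^{1/2}$ is lower-triangular banded-like with row sums $O_p(1)$; more cleanly, $(\bm{1}_1'\hSig_{\hV}^{1/2}\bm{1}_1)^2\le n\,\bm{1}_1'\hSig_{\hV}\bm{1}_1=O_p(n^2)$ only gives $O_p(1)$ again, so instead I use that $\bm{1}_1'\hSig_{\hV}^{1/2}\bm{1}_1$ equals the sum of all entries of the Cholesky factor, which by the decay of the autocovariances and the banded construction in Algorithm \ref{algor:matrix} is $O_p(n\cdot l^{1/2}\cdot n^{-1/2})=o_p(n)$, making the correction term $o_p(1)$; I would present this via the bound $n^{-2}(\bm{1}_1'\hSig_{\hV}^{1/2}\bm{1}_1)^2 \le n^{-1}\cdot \bm{1}_1'\hSig_{\hV}\bm{1}_1 \cdot \frac{\|\hSig_{\hV}^{1/2}\bm{1}_1\|^2}{\|\hSig_{\hV}\bm{1}_1\|\cdot \text{(something)}}$ — actually the honest route is simply $|\bm{1}_1'\hSig_{\hV}^{1/2}\bm{1}_1|\le \|\bm{1}_1\|\cdot\|\hSig_{\hV}^{1/2}\bm{1}_1\|\le n^{1/2}(\bm{1}_1'\hSig_{\hV}\bm{1}_1)^{1/2}=n^{1/2}O_p(n^{1/2})=O_p(n)$, which unfortunately is not enough, so I will instead argue the mean-correction differently: since $\cbep$ is already centered, $\bm{1}_1'\hSig_{\hV}^{1/2}\bm{1}_1$ need not be small, but the correct covariance of $M^{*}\cbep$ applied and multiplied out shows the cross term cancels against the true quadratic form's analogous term, so I would track both $\Vars$ and $Var$ with their respective corrections and show the \emph{difference} of the corrections is $o_p(1)$ using $\|\hSig_{\hV}-\Sig\|=O_p(n^{-1/4})$ from Lemma \ref{lemma:matrix}.

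Second, the main term: $n^{-1}\bm{1}_1'\hSig_{\hV}\bm{1}_1 - n^{-1}\bm{1}_1'\Sig\bm{1}_1 = n^{-1}\bm{1}_1'(\hSig_{\hV}-\Sig)\bm{1}_1$, and $|n^{-1}\bm{1}_1'(\hSig_{\hV}-\Sig)\bm{1}_1|\le n^{-1}\|\bm{1}_1\|^2\|\hSig_{\hV}-\Sig\| = \|\hSig_{\hV}-\Sig\| = O_p(n^{-1/4})=o_p(1)$ by Lemma \ref{lemma:matrix}. Meanwhile $n^{-1}\bm{1}_1'\Sig\bm{1}_1 = n^{-1}\sum_{i,j=1}^n\gamma(i-j)=\sum_{|h|<n}(1-|h|/n)\gamma(h)\to\sum_{h\in\mathbb{Z}}\gamma(h)=\sig^2$ by dominated convergence since $\sum_h|\gamma(h)|<\infty$; this is exactly $\lim_n Var(n^{-1/2}\sumn V_t)$, and in fact $Var(n^{-1/2}\sumn V_t)=n^{-1}\bm{1}_1'\Sig\bm{1}_1$ identically. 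Chaining these, $\Vars(n^{-1/2}\sumn\Vs_t)=n^{-1}\bm{1}_1'\Sig\bm{1}_1+o_p(1)=Var(n^{-1/2}\sumn V_t)+o_p(1)$, which is the claim.

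The main obstacle I anticipate is the rank-one mean-correction term $n^{-2}(\bm{1}_1'\hSig_{\hV}^{1/2}\bm{1}_1)^2$: operator-norm control of $\hSig_{\hV}$ alone gives only $O_p(1)$ for it, not $o_p(1)$, so naive bounding fails. The resolution is that the true bootstrap scheme generated with $\Sig$ in place of $\hSig_{\hV}$ (the $e_t^{*}$ construction in the proof of Lemma \ref{lemma:bootstrap FCLT}) produces the \emph{same} structural correction term $n^{-2}(\bm{1}_1'\Sig^{1/2}\bm{1}_1)^2$, and one shows the difference $n^{-2}[(\bm{1}_1'\hSig_{\hV}^{1/2}\bm{1}_1)^2-(\bm{1}_1'\Sig^{1/2}\bm{1}_1)^2]=o_p(1)$ by factoring the difference of squares and using $\|\hSig_{\hV}^{1/2}-\Sig^{1/2}\|=O_p(n^{-1/4})$ (a consequence of Lemma \ref{lemma:matrix} and operator monotonicity/Lipschitz continuity of the square root on the relevant norm-bounded set), together with the crude $|\bm{1}_1'A^{1/2}\bm{1}_1|\le n^{1/2}(\bm{1}_1'A\bm{1}_1)^{1/2}=O_p(n)$ bound on each factor, giving $n^{-2}\cdot O_p(n^{-1/4})\cdot O_p(n)=o_p(n^{-1})=o_p(1)$. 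Finally the $\Sig^{1/2}$-based scheme's variance is elementary to identify with $Var(n^{-1/2}\sumn V_t)+o_p(1)$ because there the standardization is exact in expectation, closing the argument.
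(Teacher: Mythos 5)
There is a genuine error at the very first step: the conditional covariance matrix of the resampled vector is \emph{not} $I_{n}-n^{-1}\bm{1}\bm{1}'$. In Step 3 of Algorithm \ref{algor:LPB} the $\ep_{t}^{*}$ are drawn i.i.d.\ (with replacement) from $\{\cep_{1},\dots,\cep_{n}\}$, and since $n^{-1}\sumn\cep_{t}=0$ and $n^{-1}\sumn\cep_{t}^{2}=1$ by construction, each $\ep_{t}^{*}$ has conditional mean $0$ and variance $1$ and distinct draws are conditionally independent; hence $\Covs(\beps)=I_{n}$ exactly. The matrix $I_{n}-n^{-1}\bm{1}\bm{1}'$ would arise for permutation (without-replacement) resampling or if one recentered by the bootstrap sample mean, neither of which happens here. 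Consequently your ``rank-one mean-correction term'' $n^{-2}(\bm{1}'\hSig_{\hV}^{1/2}\bm{1})^{2}$ is a phantom, and the long struggle to control it is chasing a term that is not present. That struggle also cannot be patched as written: if the correction term were real it would \emph{not} be negligible --- take $\Sig=\sig^{2}I$, so $\Sig^{1/2}=\sig I$ and $n^{-2}(\bm{1}'\Sig^{1/2}\bm{1})^{2}=\sig^{2}$, which would drive the bootstrap variance to zero and falsify the lemma; so the claim near the end that the $\Sig^{1/2}$-based scheme's correction is harmless ``because the standardization is exact in expectation'' is false, and the whole detour would sink the proof rather than save it.

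The good news is that your ``main term'' computation is the entire correct proof. With $\Covs(\beps)=I_{n}$ one gets $\Vars(n^{-1/2}\sumn\Vs_{t})=n^{-1}\bm{1}'\hSig_{\hV}\bm{1}$, while $Var(n^{-1/2}\sumn V_{t})=n^{-1}\bm{1}'\Sig\bm{1}$ identically, and
\begin{equation*}
\bigl|n^{-1}\bm{1}'(\hSig_{\hV}-\Sig)\bm{1}\bigr|\leq n^{-1}\|\bm{1}\|^{2}\,\|\hSig_{\hV}-\Sig\|=\|\hSig_{\hV}-\Sig\|=O_{p}(n^{-1/4})=\op
\end{equation*}
by Lemma \ref{lemma:matrix}. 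That two-line argument is, in substance, what the paper's one-line proof encodes by citing Lemma \ref{lemma:matrix} together with Lemmas 3 and 4 of \textcite{mcmurry2010}. Strip out the correction-term digression (and note that the identification of $n^{-1}\bm{1}'\Sig\bm{1}$ with $\sum_{h}\gamma(h)$ is not even needed, since the lemma compares the two finite-$n$ variances directly), and your proof is correct and matches the paper's route.
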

\begin{proof}[Proof of Lemma \ref{lemma:sigma}]
The result follows from Lemma \ref{lemma:matrix}, and Lemma 3 and 4 of \textcite{mcmurry2010}. 
\end{proof}
\begin{Lemma}\label{lemma:covariance}
Suppose Assumption \ref{assump:weak} and Condition \ref{cond:bandwidth} hold. For all $0\leq u\leq v\leq 1$, $$\Covs(R^{*}_{1}(u),R^{*}_{1}(v))=u+\op.$$
\end{Lemma}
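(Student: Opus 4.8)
The plan is to exploit that, conditional on the data, the only randomness in $R_1^*(u)$ is carried by the resampling matrix $M^{*}$: I would compute the conditional second moment $\Es(R_1^*(u)R_1^*(v))$ in closed form and then pass to the limit with Lemma \ref{lemma:matrix}, Lemma \ref{lemma:sigma}, and the summability conditions of Assumption \ref{assump:weak}. Two elementary moment identities for $\bes=M^{*}\cbe$ drive everything. Because each row of $M^{*}$ is an independent uniform draw from the standard basis vectors, $\be_{i}^{*}=\ce_{J_{i}}$ for i.i.d.\ uniform indices $J_{1},\dots,J_{n}$ on $\{1,\dots,n\}$. Since $\bar{\he}_{t}$ does not depend on $t$ we have $\sum_{t=1}^{n}\ce_{t}=0$, so $\Es(\bes)=\bm{0}$ and hence $\Covs(R_1^*(u),R_1^*(v))=\Es(R_1^*(u)R_1^*(v))$. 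For the second moment, independence of the $J_{i}$ together with $\sum_{t}\ce_{t}=0$ annihilates every off-diagonal entry of $\Es(\bes\bes')$, while the diagonal entries are all $n^{-1}\sum_{t}\ce_{t}^{2}=n^{-1}\hsig_{\he}^{-2}\sum_{t}(\he_{t}-\bar{\he}_{t})^{2}=1$ by the definition of $\hsig_{\he}^{2}$. Thus $\Es(\bes\bes')=I_{n}$, and using ${\hSig_{\hV}}^{1/2}{\hSig_{\hV}}^{1/2'}=\hSig_{\hV}$,
$$\Covs(R_1^*(u),R_1^*(v))={\sigs}^{-2}\,n^{-1}\,\bm{1}_{u}'\,\hSig_{\hV}\,\bm{1}_{v}.$$

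Next I would replace $\hSig_{\hV}$ by the population matrix $\Sig$: since $\|\bm{1}_{u}\|\le n^{1/2}$ and $\|\bm{1}_{v}\|\le n^{1/2}$, Lemma \ref{lemma:matrix} gives $|\bm{1}_{u}'(\hSig_{\hV}-\Sig)\bm{1}_{v}|\le n\|\hSig_{\hV}-\Sig\|=O_{p}(n^{3/4})$, whence $n^{-1}\bm{1}_{u}'\hSig_{\hV}\bm{1}_{v}=n^{-1}\bm{1}_{u}'\Sig\bm{1}_{v}+O_{p}(n^{-1/4})$. The remaining term is deterministic, $n^{-1}\bm{1}_{u}'\Sig\bm{1}_{v}=n^{-1}\sum_{i=1}^{\lnul}\sum_{j=1}^{\lnvl}\gamma(i-j)$; writing $\sum_{j=1}^{\lnvl}\gamma(i-j)=\sum_{h\in\mathbb{Z}}\gamma(h)-\sum_{k\ge i}\gamma(k)-\sum_{k\le i-\lnvl-1}\gamma(k)$ and summing over $1\le i\le\lnul$, the leading piece is $n^{-1}\lnul\sum_{h}\gamma(h)\to u\sum_{h}\gamma(h)$, while the two correction sums are bounded uniformly in $n$ because $\sum_{h}h|\gamma(h)|<\infty$, so they are $o(1)$ after dividing by $n$. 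Hence $n^{-1}\bm{1}_{u}'\Sig\bm{1}_{v}\to u\sum_{h}\gamma(h)$. Finally $\sig^{2}=n^{-1}\bm{1}'\Sig\bm{1}\to\sum_{h}\gamma(h)>0$ (Assumption \ref{assump:weak}), and ${\sigs}^{2}=\sig^{2}+\op$ by Lemma \ref{lemma:sigma}, so ${\sigs}^{-2}\to(\sum_{h}\gamma(h))^{-1}$ in probability; combining the two displays gives $\Covs(R_1^*(u),R_1^*(v))=u+\op$.

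The computation is short and essentially self-contained. The two places that call for a little care are the exact identity $\Es(\bes\bes')=I_{n}$, which hinges on the centering $\sum_{t}\ce_{t}=0$ and the studentization by $\hsig_{\he}$, and the boundary bookkeeping in the double autocovariance sum, which is exactly where the condition $\sum_{h}h|\gamma(h)|<\infty$ from Assumption \ref{assump:weak} enters. I do not anticipate a genuine obstacle, since the two substantive inputs, $\|\hSig_{\hV}-\Sig\|=O_{p}(n^{-1/4})$ and ${\sigs}^{2}-\sig^{2}=\op$, are already available from Lemmas \ref{lemma:matrix} and \ref{lemma:sigma}.
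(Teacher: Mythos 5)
Your proof is correct, and it reaches the key identity $\Covs(R_{1}^{*}(u),R_{1}^{*}(v))={\sigs}^{-2}n^{-1}\bm{1}_{u}'\hSig_{\hV}\bm{1}_{v}$ exactly as the paper does (the paper leaves the moment identity $\Es(\bes\bes')=I_{n}$ implicit; you spell it out, which is worth doing). Where you diverge is in evaluating the quadratic form. The paper polarizes, writing $\bm{1}_{u}'\hSig_{\hV}\bm{1}_{v}$ as a combination of diagonal forms $\bm{1}_{w}'\hSig_{\hV}\bm{1}_{w}$, identifies each as $\lnwl$ times a bootstrap partial-sum variance, and invokes Lemma \ref{lemma:sigma} to conclude $n^{-1}\bm{1}_{w}'\hSig_{\hV}\bm{1}_{w}=\sig^{2}w+\op$; note that this step actually needs Lemma \ref{lemma:sigma} to hold for partial sums $\sum_{t=1}^{\lnwl}\Vs_{t}$ uniformly in $w$, not just for the full sum as literally stated. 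You instead swap $\hSig_{\hV}$ for $\Sig$ at cost $O_{p}(n^{-1/4})$ via Lemma \ref{lemma:matrix} and then compute the deterministic limit $n^{-1}\bm{1}_{u}'\Sig\bm{1}_{v}\to u\sum_{h}\gamma(h)$ directly from $\sum_{h}h|\gamma(h)|<\infty$, using Lemma \ref{lemma:sigma} only to pin down ${\sigs}^{2}$. Your route costs a page of Toeplitz/Ces\`aro bookkeeping but is self-contained and sidesteps the partial-sum extension the paper's display tacitly relies on; the paper's route is shorter because it recycles already-stated lemmas. Both arguments are sound.
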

\begin{proof}[Proof of Lemma \ref{lemma:covariance}]
By Lemma \ref{lemma:sigma}, for all $0\leq w \leq 1$,
$$n^{-1}\bm{1}_{w}'\hSig_{\hV}\bm{1}_{w}=n^{-1}\lnwl \Vars(\lnwl^{-1/2}\sum_{t=1}^{\lnwl}\Vs_{t})=\sig^{2}w+\op.$$
Hence,
$$\Covs(R^{*}_{1}(u),R^{*}_{1}(v))={\sigs}^{-2}n^{-1}\bm{1}_{u}'\hSig_{\hV}\bm{1}_{v}={\sigs}^{-2}(2n)^{-1}(\bm{1}_{v}'\hSig_{\hV}\bm{1}_{v}+\bm{1}_{u}'\hSig_{\hV}\bm{1}_{u}-\bm{1}_{v-u}'\hSig_{\hV}\bm{1}_{v-u})=u+\op.$$
\end{proof}
\begin{Lemma}\label{lemma:tightness}
Suppose $A$ is a $n\times n$-dimensional symmetric positive semi-definite matrix, $A=A^{1/2}{A^{1/2}}'$, and $A^{1/2}=\{a_{ij}\}_{i,j=1}^{n}$. Suppose $\xi_{1}^{*},...,\xi^{*}_{n}$ are $\Ps$-i.i.d random variables with  $\Es\xi^{*}_{t}=0$ and  $\Es({\xi_{j}^{*}}^{2})={s^{*}}^{2}$. Let $R^{*}(u)={\sigs}^{-1}n^{-1/2}\bm{1}_{u}'A^{1/2}\bxis$. Then, for all $0\leq u \leq v \leq w\leq 1$, 
$$\Es((R^{*}(v)-R^{*}(u))^{2}(R^{*}(w)-R^{*}(v))^{2})\leq4{\sigs}^{-4}\Es({\xi_{j}^{*}}^{4})||A||^{2}(w-u)^{2}.$$
\end{Lemma}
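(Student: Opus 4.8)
The plan is to reduce the mixed fourth-order moment to a product of two ordinary fourth moments by the Cauchy--Schwarz inequality, to evaluate each of those via the elementary moment identity for a linear combination of i.i.d.\ mean-zero variables, and finally to convert the resulting squared-Euclidean-norm bounds into the operator norm $||A||$ and the length $w-u$. (If $\Es({\xi_{j}^{*}}^{4})=\infty$ there is nothing to prove, so assume it is finite.) Write $D_{1}=R^{*}(v)-R^{*}(u)$ and $D_{2}=R^{*}(w)-R^{*}(v)$, and set $\bm{b}=({A^{1/2}})'\bm{1}_{u,v}$ and $\bm{c}=({A^{1/2}})'\bm{1}_{v,w}$, so that $D_{1}={\sigs}^{-1}n^{-1/2}\bm{b}'\bxis$ and $D_{2}={\sigs}^{-1}n^{-1/2}\bm{c}'\bxis$ (recall $\bm{1}_{u,v}=\bm{1}_{v}-\bm{1}_{u}$). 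By Cauchy--Schwarz, $\Es(D_{1}^{2}D_{2}^{2})\le(\Es D_{1}^{4})^{1/2}(\Es D_{2}^{4})^{1/2}$, so it suffices to bound a single fourth moment, say $\Es((\bm{b}'\bxis)^{4})=\Es((\sum_{j}b_{j}\xi_{j}^{*})^{4})$.

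Expanding the fourth power and using that the $\xi_{j}^{*}$ are $\Ps$-i.i.d.\ with mean zero and second moment ${s^{*}}^{2}$, only the fully diagonal index pattern and the three two-fold pairings survive, giving
$$\Es\left(\left(\sum_{j}b_{j}\xi_{j}^{*}\right)^{4}\right)=\Es({\xi_{j}^{*}}^{4})\sum_{j}b_{j}^{4}+3{s^{*}}^{4}\left(\left(\sum_{j}b_{j}^{2}\right)^{2}-\sum_{j}b_{j}^{4}\right).$$
Since ${s^{*}}^{4}=(\Es{\xi_{j}^{*}}^{2})^{2}\le\Es({\xi_{j}^{*}}^{4})$ by Jensen and $\sum_{j}b_{j}^{4}\le(\sum_{j}b_{j}^{2})^{2}=||\bm{b}||^{4}$, dropping the nonpositive term $-3{s^{*}}^{4}\sum_{j}b_{j}^{4}$ shows the right-hand side is at most $4\,\Es({\xi_{j}^{*}}^{4})\,||\bm{b}||^{4}$; symmetrically $\Es((\bm{c}'\bxis)^{4})\le4\,\Es({\xi_{j}^{*}}^{4})\,||\bm{c}||^{4}$. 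Feeding these back through Cauchy--Schwarz yields $\Es(D_{1}^{2}D_{2}^{2})\le4\,{\sigs}^{-4}n^{-2}\,\Es({\xi_{j}^{*}}^{4})\,||\bm{b}||^{2}\,||\bm{c}||^{2}$.

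It remains to bound the two norms. Because $A$ is symmetric positive semi-definite, $||\bm{b}||^{2}=\bm{1}_{u,v}'A^{1/2}{A^{1/2}}'\bm{1}_{u,v}=\bm{1}_{u,v}'A\bm{1}_{u,v}\le||A||\,||\bm{1}_{u,v}||^{2}=||A||(\lnvl-\lnul)$, and likewise $||\bm{c}||^{2}\le||A||(\lnwl-\lnvl)$. The lemma therefore follows once we verify $n^{-2}(\lnvl-\lnul)(\lnwl-\lnvl)\le(w-u)^{2}$. If the left-hand side vanishes this is immediate; otherwise $\lnvl-\lnul\ge1$ and $\lnwl-\lnvl\ge1$, hence $\lnwl-\lnul\ge2$, while $\lnwl-\lnul<n(w-u)+1$ always, which forces $n(w-u)>1$ and hence $\lnwl-\lnul<2n(w-u)$. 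The arithmetic--geometric mean inequality then gives $(\lnvl-\lnul)(\lnwl-\lnvl)\le\tfrac14(\lnwl-\lnul)^{2}<n^{2}(w-u)^{2}$. Combining the two displayed bounds proves the claim.

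The only step demanding care is the last one: the crude estimate $\lnwl-\lnul\le n(w-u)+1$ leaves a spurious additive term, and it is precisely the observation that a nonzero product of the two increments forces at least two lattice points inside $(nu,nw]$ — hence $n(w-u)>1$ — that absorbs this term and lands the bound exactly on $(w-u)^{2}$, as needed to invoke Theorem 13.5 of \textcite{billingsley1999} with the continuous nondecreasing function $F(x)\propto x$. Everything else is a routine i.i.d.\ moment computation.
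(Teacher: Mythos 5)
Your proposal is correct and reaches the paper's exact constant $4{\sigs}^{-4}\Es({\xi_{j}^{*}}^{4})\|A\|^{2}(w-u)^{2}$, but by a genuinely different decomposition. The paper expands the mixed moment $\Es(D_{1}^{2}D_{2}^{2})$ directly into three pieces $B_{1},B_{2},B_{3}$ (the diagonal, pairing, and fourth-cumulant contributions), which forces it to control the cross term $\bm{1}_{u,v}'A\bm{1}_{v,w}$ via the polarization identity $\bm{1}_{u,v}'A\bm{1}_{v,w}=2^{-1}(\bm{1}_{u,w}'A\bm{1}_{u,w}-\bm{1}_{u,v}'A\bm{1}_{u,v}-\bm{1}_{v,w}'A\bm{1}_{v,w})$ before applying the operator-norm bound. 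You instead decouple the two squared increments by Cauchy--Schwarz, $\Es(D_{1}^{2}D_{2}^{2})\le(\Es D_{1}^{4})^{1/2}(\Es D_{2}^{4})^{1/2}$, and then need only the single-vector fourth-moment identity plus Jensen (${s^{*}}^{4}\le\Es({\xi_{j}^{*}}^{4})$); the cross term and the polarization step disappear entirely, and the two routes happen to land on the same constant $4$. A further point in your favor: the paper simply asserts $n^{-2}(\lnvl-\lnul)(\lnwl-\lnvl)\le 4(w-u)^{2}$, whereas you supply the missing justification --- that a nonzero product forces $\lnwl-\lnul\ge 2$, hence $n(w-u)>1$, which absorbs the additive $+1$ in the crude lattice bound. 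This is precisely the degenerate-increment issue that Billingsley's Theorem 13.5 requires one to handle, so your write-up is, if anything, more complete than the paper's on that step.
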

\begin{proof}[Proof of Lemma \ref{lemma:tightness}]
\begin{align}
\label{eqn:tightness 1}
&\mathrel{\phantom{=}}\Es((R_{1}^{*}(v)-R_{1}^{*}(u))^{2}(R_{1}^{*}(w)-R_{1}^{*}(v))^{2})={\sigs}^{-4}n^{-2}\Es ((\bm{1}_{u,v}'A^{1/2}\bxis)^{2}(\bm{1}_{u,v}'A^{1/2}\bxis)^{2})\nonumber \\
&={\sigs}^{-4}n^{-2}E^{*}((\sum_{j=1}^{n}\sum_{i=\lnul+1}^{\lnvl}a_{ij}\xi^{*}_{j})^{2}(\sum_{j=1}^{n}\sum_{i=\lnvl+1}^{\lnwl}a_{ij}\xi^{*}_{j})^{2})
={\sigs}^{-4}(B_{1}+B_{2}+B_{3}),
\end{align}
where
\begin{align*}
&\mathrel{\phantom{=}}B_{1}={s^{*}}^{4}n^{-2}(\sum_{j=1}^{n}(\sum_{i=\lnul+1}^{\lnvl}a_{ij})^{2})(\sum_{j=1}^{n}(\sum_{i=\lnvl+1}^{\lnwl}a_{ij})^{2})\\
&\mathrel{\phantom{=}}B_{2}=2{s^{*}}^{4}n^{-2}(\sum_{j=1}^{n}(\sum_{i=\lnul+1}^{\lnvl}a_{ij})(\sum_{i=\lnvl+1}^{\lnwl}a_{ij}))^{2}\\
&\mathrel{\phantom{=}}B_{3}=\Es({\xi_{j}^{*}}^{4}-3{s^{*}}^{4})n^{-2}\sum_{j=1}^{n}(\sum_{i=\lnul+1}^{\lnvl}a_{ij})^{2}(\sum_{i=\lnvl+1}^{\lnwl}a_{ij})^{2}
\end{align*}
Notice $B_{1}={s^{*}}^{4}n^{-2}\bm{1}_{u,v}'A\bm{1}_{u,v}\bm{1}_{v,w}'A\bm{1}_{v,w}$, and $B_{2}=2{s^{*}}^{4}n^{-2}(\bm{1}_{u,v}'A\bm{1}_{v,w})^{2}$. Since 
$$\bm{1}_{u,v}'A\bm{1}_{v,w}=2^{-1}(\bm{1}_{u,w}'A\bm{1}_{u,w}-\bm{1}_{u,v}'A\bm{1}_{u,v}-\bm{1}_{v,w}'A\bm{1}_{v,w}),$$
and for $0\leq r\leq s\leq 1$,
$$\bm{1}_{r,s}'A\bm{1}_{r,s}\leq ||A||(\lnrl-\lnsl),$$
we have 
\begin{align*}
&B_{1}\leq {s^{*}}^{4}||A||^{2} ((\lnvl-\lnul)/n)((\lnwl-\lnvl)/n)\leq 4{s^{*}}^{4}||A||^{2}(w-u)^{2},\\
&B_{2}\leq 2{s^{*}}^{4}||A||^{2} ((\lnwl-\lnul)/n)^{2}\leq 8{s^{*}}^{4}||A||^{2}(w-u)^{2},\\
&B_{3}\leq \Es({\xi_{j}^{*}}^{4}-3{s^{*}}^{4}){s^{*}}^{-4}B_{1}\leq4\Es({\xi_{j}^{*}}^{4}-3{s^{*}}^{4})||A||^{2}(w-u)^{2}.
\end{align*}
The lemma follows from \eqref{eqn:tightness 1}.
\end{proof}
\begin{Lemma}\label{lemma:fourth term}
Under Assumption \ref{assump:weak} and Condition \ref{cond:bandwidth},
$$(i)\ \Es({e_{j}^{*}}^{4})=\Op, \ (ii)\ \Es(({\vep_{j}^{*}}-{e_{j}^{*}})^{4})=\Op, \ \text{and} \  (iii)\ \Es(({\ep_{j}^{*}}-{\vep_{j}^{*}})^{4})=\Op.$$
\end{Lemma}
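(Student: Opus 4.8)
Here is how I would proceed.

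The plan is to reduce all three moments to empirical fourth-moment averages over the data and then bound those averages using the structure of the whitening matrices together with Lemmas \ref{lemma:phi}, \ref{lemma:matrix}, and \ref{lemma:sigma}. The key remark is that $\bes$, $\bveps$, and $\bep^{*}$ are all produced from the \emph{same} selection matrix $M^{*}$, whose $j$-th row equals $\bm{e}_{I_{j}}'$ for $I_{1},\dots,I_{n}$ that are $\Ps$-i.i.d.\ uniform on $\{1,\dots,n\}$; hence $e_{j}^{*}=\ce_{I_{j}}$, $\vep_{j}^{*}-e_{j}^{*}=\cvep_{I_{j}}-\ce_{I_{j}}$, and $\ep_{j}^{*}-\vep_{j}^{*}=\cep_{I_{j}}-\cvep_{I_{j}}$. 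Since $I_{j}$ is $\Ps$-uniform, this gives
\begin{align*}
\Es({e_{j}^{*}}^{4})&=n^{-1}\sumn\ce_{t}^{4},\\
\Es((\vep_{j}^{*}-e_{j}^{*})^{4})&=n^{-1}\sumn(\cvep_{t}-\ce_{t})^{4},\\
\Es((\ep_{j}^{*}-\vep_{j}^{*})^{4})&=n^{-1}\sumn(\cep_{t}-\cvep_{t})^{4}.
\end{align*}
Since each right-hand side is a function of $\{Y_{t}\}$ alone, it suffices to show each is $\Op$.

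For $(i)$, write $\ce_{t}=(\he_{t}-\bar{\he}_{t})/\hsig_{\he}$ with $\hbe=\Sig^{-1/2}\cbV$. First I would bound the numerator: by Jensen, $n^{-1}\sumn(\he_{t}-\bar{\he}_{t})^{4}\le 16\,n^{-1}\|\hbe\|_{4}^{4}\le 16\,\|\Sig^{-1/2}\|_{4}^{4}\,n^{-1}\|\cbV\|_{4}^{4}$, where $\|\Sig^{-1/2}\|_{4}$ is the induced $\ell^{4}$ operator norm of $\Sig^{-1/2}$; this is $O(1)$ because, by the invertibility and autocovariance decay imposed on $\{V_{t}\}$ in Assumption \ref{assump:weak}, the prewhitening $\Sig^{-1/2}$ is a bounded short-memory linear filter, exactly as used in \textcite{mcmurry2010}. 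Next, since $\cV_{t}=\hV_{t}-\bar{\hV}_{t}$ with $\hV_{t}=V_{t}-(\hphi-\phi)Y_{t-1}$, Jensen gives $n^{-1}\|\cbV\|_{4}^{4}\le C\big(n^{-1}\sumn V_{t}^{4}+(\hphi-\phi)^{4}n^{-1}\sumn Y_{t-1}^{4}\big)$ for an absolute constant $C$; here $n^{-1}\sumn V_{t}^{4}=\Op$ by $E(V_{t}^{4})<\infty$, while the $(\hphi-\phi)$-term is $\op$ by Lemma \ref{lemma:phi} together with $n^{-1}\sumn Y_{t-1}^{4}=\Op$ when $\phi<1$ and $n^{-1}\sumn Y_{t-1}^{4}=O_{p}(n^{2})$ when $\phi=1$. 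For the denominator, $\hsig_{\he}^{2}=n^{-1}\|\hbe\|^{2}-\bar{\he}_{t}^{2}$, where $n^{-1}\|\hbe\|^{2}=n^{-1}\cbV'\Sig^{-1}\cbV\ge\|\Sig\|^{-1}n^{-1}\|\cbV\|^{2}=\|\Sig\|^{-1}(\gamma(0)+\op)$ is bounded away from zero in probability, and $\bar{\he}_{t}=n^{-1}\bm{1}'\Sig^{-1/2}\cbV=\op$: indeed $\Sig^{-1/2}$ whitens $\bV$, so $\bm{1}'\Sig^{-1/2}\bV=O_{p}(\sqrt{n})$, and the $\bar{V}_{t}$ and $(\hphi-\phi)$ contributions are again $\op$ by Lemma \ref{lemma:phi}. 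Combining, $\Es({e_{j}^{*}}^{4})=\Op$.

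For $(ii)$ I would use $\cvep_{t}-\ce_{t}=(\he_{t}-\bar{\he}_{t})(\hsig_{\he}-\hsig_{\hep})/(\hsig_{\he}\hsig_{\hep})$, so that $n^{-1}\sumn(\cvep_{t}-\ce_{t})^{4}\le(\hsig_{\he}-\hsig_{\hep})^{4}(\hsig_{\he}\hsig_{\hep})^{-4}\,n^{-1}\sumn(\he_{t}-\bar{\he}_{t})^{4}$; the last factor is $\Op$ by $(i)$, both $\hsig_{\he}$ and $\hsig_{\hep}$ are bounded away from zero in probability (the latter by the argument of $(i)$ run with $\hSig_{\hV}^{-1}$ in place of $\Sig^{-1}$, using Lemma \ref{lemma:matrix}), and $\hsig_{\he}-\hsig_{\hep}=(\hsig_{\he}^{2}-\hsig_{\hep}^{2})/(\hsig_{\he}+\hsig_{\hep})=\op$ because $\hsig_{\he}^{2}-\hsig_{\hep}^{2}=n^{-1}\cbV'(\Sig^{-1}-\hSig_{\hV}^{-1})\cbV-(\bar{\he}_{t}^{2}-\bar{\hep}_{t}^{2})=O_{p}(n^{-1/4})+\op$ by Lemma \ref{lemma:matrix} and $\|\cbV\|^{2}=O_{p}(n)$. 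For $(iii)$, $\cep_{t}$ and $\cvep_{t}$ share the denominator $\hsig_{\hep}$, so with $\hbep-\hbe=(\hSig_{\hV}^{-1/2}-\Sig^{-1/2})\cbV$ and $\|\bm{v}\|_{4}\le\|\bm{v}\|$ for vectors I get
$$n^{-1}\sumn(\cep_{t}-\cvep_{t})^{4}\le 16\,\hsig_{\hep}^{-4}\,n^{-1}\|(\hSig_{\hV}^{-1/2}-\Sig^{-1/2})\cbV\|^{4}\le 16\,\hsig_{\hep}^{-4}\,n^{-1}\|\hSig_{\hV}^{-1/2}-\Sig^{-1/2}\|^{4}\|\cbV\|^{4},$$
which is $\Op$ because $\|\hSig_{\hV}^{-1/2}-\Sig^{-1/2}\|=O_{p}(n^{-1/4})$ (the Cholesky-factor counterpart of Lemma \ref{lemma:matrix}, obtained as in \textcite{mcmurry2010}), $\|\cbV\|=O_{p}(\sqrt{n})$, and $\hsig_{\hep}^{-4}=\Op$, so that $n^{-1}\cdot O_{p}(n^{-1})\cdot O_{p}(n^{2})=\Op$.

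The step I expect to be the main obstacle is the uniform fourth-moment control of the whitened residuals $\he_{t}=(\Sig^{-1/2}\cbV)_{t}$ in $(i)$: the crude estimate $\|\hbe\|_{4}\le\|\hbe\|$ wastes a factor $n^{1/2}$ and would yield only $\Es({e_{j}^{*}}^{4})=O_{p}(n)$, so one genuinely needs that $\Sig^{-1/2}$ is a bounded operator on $\ell^{4}$ — that is, that the prewhitening is a short-memory, invertible filter — and this is exactly where the invertibility and autocovariance-decay hypotheses of Assumption \ref{assump:weak} enter, in the same way they do in \textcite{mcmurry2010}. A secondary point requiring care is keeping the sample means $\bar{\he}_{t}$ and $\bar{\hep}_{t}$ from contaminating the variance denominators $\hsig_{\he}^{2}$ and $\hsig_{\hep}^{2}$, which is handled by the exact whitening identity that $Var(\Sig^{-1/2}\bV)$ is the identity matrix, forcing $\bm{1}'\Sig^{-1/2}\bV=O_{p}(\sqrt{n})$ rather than $O_{p}(n)$.
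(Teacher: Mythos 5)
Your argument is correct and follows the same skeleton as the paper's proof: reduce each conditional moment to an empirical fourth-moment average, bound the denominators $\hsig_{\he}^{2},\hsig_{\hep}^{2}$ away from zero exactly as the paper does (via $n^{-1}\cbV'\Sig^{-1}\cbV\geq\|\Sig\|^{-1}n^{-1}\sumn\cV_{t}^{2}$ and $\bar{\he}_{t}=\op$), handle (ii) through the identity $\cvep_{t}-\ce_{t}=(\hsig_{\hep}^{-1}-\hsig_{\he}^{-1})(\he_{t}-\bar{\he}_{t})$, and handle (iii) through $\|\cdot\|_{4}\leq\|\cdot\|$ together with $\|\hSig_{\hV}^{-1/2}-\Sig^{-1/2}\|=O_{p}(n^{-1/4})$ and $\|\cbV\|^{2}=O_{p}(n)$ — these last two steps are essentially verbatim the paper's. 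The one place you genuinely diverge is the core estimate $n^{-1}\sumn\he_{t}^{4}=\Op$ in (i). You bound $\|\Sig^{-1/2}\|_{4}$ as an induced operator norm on $\ell^{4}$ and then control $n^{-1}\|\cbV\|_{4}^{4}$ directly; the paper instead splits $\cbV=\bV+(\cbV-\bV)$, cites Lemma 5 of \textcite{mcmurry2010} for $n^{-1}\|\Sig^{-1/2}\bV\|_{4}^{4}=\Op$, and applies the lossy bound $\|\cdot\|_{4}^{4}\leq\|\cdot\|^{4}$ only to the perturbation $\cbV-\bV$, which is harmless precisely because $n^{-1}\|\cbV-\bV\|^{4}=\Op$ by Lemma \ref{lemma:phi}. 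Your route is arguably cleaner and you correctly diagnose why the crude $\ell^{4}\leq\ell^{2}$ bound cannot be applied to the full vector, but the claim $\|\Sig^{-1/2}\|_{4}=O(1)$ is not free: it requires uniform $\ell^{1}$/$\ell^{\infty}$ bounds on the rows and columns of the Cholesky factor of $\Sig^{-1}$ (a Baxter-type bound, needing the spectral density bounded away from zero) followed by interpolation, and Assumption \ref{assump:weak} as stated only gives positivity of the spectral density at frequency zero. This is the same implicit input that the paper outsources to \textcite{mcmurry2010}, so it is not a gap relative to the paper's own level of rigor, but if you keep your version you should state and justify the operator-norm bound explicitly rather than asserting it.
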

\begin{proof}[Proof of Lemma \ref{lemma:fourth term}]
The proof applies Lemma \ref{lemma:phi} through out. Notice that
\begin{equation} \label{eqn:fourth term 1}
n^{-1}\sumn\he_{t}^{2}=n^{-1}\cbV'\Sig^{-1}\cbV\geq ||\Sig||^{-1}n^{-1}\sumn \cV_{t}^{2}=||\Sig||^{-1}\gamma_{0}+\op.
\end{equation}
By Chebyshev's Inequality,
\begin{equation} \label{eqn:fourth term 2}
{\bar{\he}_{t}}^{2}\leq 2(n^{-1}\bm{1}'\Sig^{-1/2}(\cbV-\bV))^{2}+2(n^{-1}\bm{1}'\Sig^{-1/2}\bV)^{2})\leq
2||\Sig^{-1}||n^{-1}
||\cbV-\bV||^{2}+\op=\op.
\end{equation}
By \eqref{eqn:fourth term 1} and \eqref{eqn:fourth term 2}, $\hsig_{\he}^{-2}=\Op$. Similarly, $\hsig_{\hep}^{-2}=\Op.$ Further,
$$n^{-1}\sumn \he_{t}^{4}=n^{-1}||\Sig^{-1/2}\cbV||_{4}^{4}\leq 8n^{-1}(||\Sig^{-1/2}(\cbV-\bV)||_{4}^{4}+||\Sig^{-1/2}\bV||_{4}^{4})=\Op,$$
since by Lemma 5 of \textcite{mcmurry2010},
$$n^{-1}||\Sig^{-1/2}\bV||_{4}^{4}=\Op,$$ 
and 
\begin{align*}
n^{-1}||\Sig^{-1/2}(\cbV-\bV)||_{4}^{4} &\leq n^{-1}||\Sig^{-1/2}(\cbV-\bV)||^{4}\leq 8||\Sig^{-1}||^{2}n^{-1}(||\hbV-\bV||^{4}+||\bar{\hbV}||^{4})\\
&\leq 64 ||\Sig^{-1}||^{2}((\hphi-\phi)^{4}n^{-1}\sumn Y_{t-1}^{4}+\bar{V}_{t}^{4}+(\hphi-\phi)^{4}\bar{Y}_{t}^{4})=\Op.
\end{align*}
For (i) and (ii), therefore,
\begin{align*}
&\Es({e_{j}^{*}}^{4})=\hsig_{\he}^{-4}n^{-1}\sumn(\he_{t}-\bar{\he}_{t})^{4}=\Op,\\
&\Es(({\vep_{j}^{*}}-{e_{j}^{*}})^{4})=(\hsig_{\hep}^{-1}-\hsig_{\he}^{-1})^{4}n^{-1}\sumn(\he_{t}-\bar{\he}_{t})^{4}=\Op.
\end{align*}
For (iii), by Lemma \ref{lemma:matrix},
\begin{align*}
&\mathrel{\phantom{=}}\Es(({\ep_{j}^{*}}-{\vep_{j}^{*}})^{4})=\hsig_{\hep}^{-4}n^{-1}\sumn(\hep_{t}-\bar{\hep}_{t}-(\he_{t}-\bar{\he}_{t}))^{4}\leq \hsig_{\hep}^{-4}n^{-1}(\sumn(\hep_{t}-\bar{\hep}_{t}-(\he_{t}-\bar{\he}_{t}))^{2})^{2}\\
&=\hsig_{\hep}^{-4}n^{-1}||(I-n^{-1}\bm{1}\bm{1}')(\hSig_{\hV}^{-1/2}-\Sig^{-1/2})\cbV||^{4}\leq \hsig_{\hep}^{-4}n||\hSig_{\hV}^{-1/2}-\Sig^{-1/2}||^{4}(n^{-1}\sumn\cV_{t}^{2})^{2}=\Op.
\end{align*}
\end{proof}
\begin{Lemma}\label{lemma:variance}
Suppose Assumption \ref{assump:weak} and Condition \ref{cond:bandwidth} hold. Then  $$\Vars(n^{-1}\sumn{\Vs_{t}}^{2})=o_{p}(1),\ \text{and} \ \Es(n^{-1}\sumn{\Vs_{t}}^{2})=\gamma_{0}+\op.$$
\end{Lemma}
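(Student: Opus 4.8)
\textbf{Proof plan for Lemma \ref{lemma:variance}.}

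The plan is to express $\Vs_t$ and its square in terms of the bootstrap innovations $\ep^*_1,\dots,\ep^*_n$, which (conditionally on the data) are i.i.d.\ draws from the centered, scaled residuals $\{\cep_t\}$, and then compute the conditional mean and variance directly. Writing $\bVs = \hSig_{\hV}^{1/2}\beps$ and letting $\hSig_{\hV}^{1/2} = \{h_{ij}\}$, we have $\Vs_t = \sum_{j=1}^{n} h_{tj}\ep^*_j$, so $n^{-1}\sumn (\Vs_t)^2 = n^{-1}\beps' \hSig_{\hV}\beps$. Taking $\Es$ and using $\Es(\ep^*_j\ep^*_k) = \delta_{jk}$ (since $\{\cep_t\}$ is centered with sample variance one by construction of Step 2) gives $\Es(n^{-1}\sumn (\Vs_t)^2) = n^{-1}\,\mathrm{tr}(\hSig_{\hV})$. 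So for the mean statement it suffices to show $n^{-1}\mathrm{tr}(\hSig_{\hV}) = \gamma_0 + \op$. This follows from Lemma \ref{lemma:matrix}: $|n^{-1}\mathrm{tr}(\hSig_{\hV}) - n^{-1}\mathrm{tr}(\Sig)| \le \|\hSig_{\hV}-\Sig\| = O_p(n^{-1/4})$, while $n^{-1}\mathrm{tr}(\Sig) = \gamma(0) = \gamma_0$ exactly since $\Sig = \mathrm{Var}(\bV)$ has constant diagonal $\gamma(0)$.

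For the variance statement, I would expand $\Vars(n^{-1}\beps'\hSig_{\hV}\beps)$ as a quadratic-form variance under i.i.d.\ entries. The standard formula gives, with $A = \hSig_{\hV} = \{a_{ij}\}$,
\begin{equation*}
\Vars\Bigl(n^{-1}\sumn (\Vs_t)^2\Bigr) = n^{-2}\Bigl[(\Es{\cep_j}^4 - 3)\sum_{j=1}^n a_{jj}^2 + 2\sum_{i,j} a_{ij}^2\Bigr] = n^{-2}\Bigl[(\Es{\cep_j}^4 - 3)\sum_{j=1}^n a_{jj}^2 + 2\|A\|_F^2\Bigr],
\end{equation*}
where $\|A\|_F$ is the Frobenius norm. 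Now $\|A\|_F^2 \le n\|A\|^2$ and $\|A\| \le \|\Sig\| + O_p(n^{-1/4}) = O_p(1)$ by Lemma \ref{lemma:matrix}, so the second term is $O_p(n^{-1}) = \op$. For the first term, $\sum_j a_{jj}^2 \le n\max_j a_{jj}^2 \le n\|A\|^2 = O_p(n)$, so it contributes $n^{-2}\cdot O_p(n)\cdot \Es({\cep_j}^4)$; since $\Es({\cep_j}^4) = n^{-1}\sumn \cep_t^4 = O_p(1)$ (this is the content of part (iii) of Lemma \ref{lemma:fourth term} applied with the true $\Sig$ replaced by the identity normalization, or more directly follows from the same fourth-moment bounds on $\hep_t = (\hSig_{\hV}^{-1/2}\cbV)_t$ used there), the first term is also $\op$. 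Hence the variance is $\op$.

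The main obstacle is controlling $\Es({\cep_j}^4) = n^{-1}\sumn (\cep_t)^4$, i.e.\ showing the bootstrap innovations have bounded empirical fourth moment. This requires bounding $n^{-1}\|\hSig_{\hV}^{-1/2}\cbV\|_4^4$, which in turn needs $\|\hSig_{\hV}^{-1/2} - \Sig^{-1/2}\| = O_p(n^{-1/4})$ (Lemma \ref{lemma:matrix}), the bound $n^{-1}\|\Sig^{-1/2}\bV\|_4^4 = O_p(1)$ (Lemma 5 of \textcite{mcmurry2010}), and control of the regression-residual correction $\hbV - \bV$ via Lemma \ref{lemma:phi} and the rates on $n^{-1}\sumn Y_{t-1}^4$ — exactly the chain of estimates already assembled in the proof of Lemma \ref{lemma:fourth term}. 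Once that fourth-moment bound and the operator-norm bound from Lemma \ref{lemma:matrix} are in hand, both conclusions drop out of the quadratic-form mean and variance identities above with only routine arithmetic.
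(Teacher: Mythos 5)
Your proposal is correct and follows essentially the same route as the paper's proof: both write $n^{-1}\sumn{\Vs_t}^2=n^{-1}{\beps}'\hSig_{\hV}\beps$, bound the mean via $n^{-1}|tr(\hSig_{\hV}-\Sig)|\le\|\hSig_{\hV}-\Sig\|$ from Lemma \ref{lemma:matrix}, and bound the variance with the standard i.i.d.\ quadratic-form variance identity (which the paper cites from Seber rather than writing out) together with the fourth-moment control from Lemma \ref{lemma:fourth term}. The only slight imprecision is attributing $\Es({\ep_j^*}^4)=O_p(1)$ to part (iii) alone; it follows from combining parts (i)--(iii) of that lemma, as your parenthetical hedge already suggests.
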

\begin{proof}[Proof of Lemma \ref{lemma:variance}]
Notice that $\sumn {\Vs_{t}}^{2}={\beps}'\hSig\beps$. By Lemma \ref{lemma:matrix} and \ref{lemma:fourth term} above and \textcite{seber2012} (Theorem 1.5 and 1.6, pp. 9-10),
$$\Es(n^{-1}\sumn{\Vs_{t}}^{2}-\gamma_{0})=n^{-1}tr(\hSig-\Sig)\leq ||\hSig-\Sig||=\op,$$
$$\Vars(n^{-1}\sumn {\Vs_{t}}^{2})\leq n^{-2}E({\ep_{t}^{*}}^{4})tr(\hSig^{2})\leq n^{-1}E({\ep_{t}^{*}}^{4})||\hSig||^{2}=\op.$$
\end{proof}
\begin{proof}[Proof of Lemma \ref{lemma:bandwidth}]
We now prove 
\begin{equation}\label{eqn:bandwidth1}
\sum_{h=\hl+1}^{\infty}|\gamma(h)|=O_{p}(n^{-1/4}).
\end{equation}
Assume at this stage $\gamma(h)\neq0$ for infinitely many $h$. Let $g_{h}=|\gamma(h)|$, $G_{h}=\sum_{k=h+1}^{\infty}g_{k}$, $G^{-1}(x)=\min \{h\geq 0: G_{h}\leq x\}$, and $a=G^{-1}(n^{-1/4})$. Then
\begin{align}\label{eqn:bandwidth2}
P(\sum_{h=\hl+1}^{\infty}|\gamma(h)|>n^{-1/4})&=P(\hl<a)=1-P(\forall l=1,...,a-1, \sup_{1\leq k\leq K_{n}}|\hrho_{\hV}(l+k)|\geq c(\log n/n)^{1/2})\nonumber \\
&=1-(D_{1}-D_{2}-D_{3}),
\end{align}
where
\begin{align*}
&D_{1}=P(\forall l=1,...,a-1, \sup_{1\leq k\leq K_{n}}|\rho(l+k)|\geq 3c(\log n/n)^{1/2}),\\    
&D_{2}=P(\exists l=1,...,a-1, \sup_{1\leq k\leq K_{n}}|\hrho_{V}(l+k)-\rho(l+k)|> c(\log n/n)^{1/2})),\\ 
&D_{3}=P(\exists l=1,...,a-1, \sup_{1\leq k\leq K_{n}}|\hrho_{\hV}(l+k)-\hrho_{V}(l+k)|> c(\log n/n)^{1/2})).
\end{align*}
By the proof of Lemma \ref{lemma:matrix} and Theorem 1 of \textcite{xiao2011}, 
\begin{equation} \label{eqn:bandwidth3}
D_{2}=o(1) \ \text{and} \ D_{3}=o(1).
\end{equation}
Now we show $D_{1}=1+o(1)$. Let $f(l)=\sup_{1\leq k}g_{l+k}$ and $f_{n}(l)=\sup_{1\leq k\leq K_{n}}g_{l+k}$. For some $0<D<1$, 
$$\inf_{1\leq l<a}f_{n}(l)\geq \inf_{1 \leq l<a}f(l)-\sup_{1 \leq l<a}|f_{n}(l)-f(l)|\geq \sup_{k\geq a}g_{k}-\sup_{k\geq a+K_{n}}g_{k}\geq D \sup_{k\geq a}g_{k}\geq D g_{a}.$$
Hence, for some $C>0$,
\begin{align}\label{eqn:bandwidth4}
D_{1}&\geq P(\inf_{1\leq l<a}f_{n}(l)\geq C(\log n/n)^{1/2})\geq P( g_{a}\geq (C/D)(\log n/n)^{1/2})\nonumber \\
&=P( g_{G^{-1}(n^{-1/4})}\geq (C/D)(\log n/n)^{1/2})=1+o(1),
\end{align}
where the last equation results from Lemma \ref{lemma:inverse covariance} below. A combination of \eqref{eqn:bandwidth2}, \eqref{eqn:bandwidth3}, and \eqref{eqn:bandwidth4} gives \eqref{eqn:bandwidth1} when $\gamma(h)\neq 0$ for infinitely many $h$. When $\gamma(h)\neq 0$ only for finitely many $h$, \eqref{eqn:bandwidth1} follows analogously. It can be similarly derived that 
$$\hl n^{-1/2}=O_{p}(n^{-1/4}).$$

\end{proof}
\begin{Lemma}\label{lemma:inverse covariance}
Suppose Assumption \ref{assump:weak} and Assumption \ref{assump:strong} hold. Suppose $\gamma(h)\neq 0$ for infinitely many $h$. Let $g_{h}=|\gamma(h)|$, $G_{h}=\sum_{k=h+1}^{\infty}g_{k}$, $G^{-1}(x)=\min \{h\geq 0: G_{h}\leq x\}$. Then for a small enough positive number $x$, $$g_{G^{-1}(x)}>(\alpha/4)x^{\beta/(\beta-1)}.$$
\end{Lemma}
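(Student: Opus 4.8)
The plan is to work directly with the crossing index $h:=G^{-1}(x)$ and exploit the two structural facts the hypotheses provide: the polynomial tail bound $|\gamma(h)|=o(h^{-\beta})$ controls how \emph{large} $h$ can be, while the monotonicity of $h^{\alpha}G_{h}$ (Assumption \ref{assump:strong}) controls how \emph{flat} the sequence $G_{h}$ can be near $h$, forcing the single decrement $g_{h}=G_{h-1}-G_{h}$ to be a non-negligible fraction of the remaining tail $G_{h}$. Combining a crude estimate ``$g_{h}$ is at least the overshoot $x-G_{h}$'' with the monotonicity estimate ``$g_{h}\gtrsim \alpha G_{h}/h$'' and then optimizing over the unknown value of $G_{h}$ gives the bound with essentially no loss.

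First I would record the preliminaries valid for all small $x>0$. Since $\gamma(k)\neq 0$ for infinitely many $k$, $G_{j}>0$ for every $j$; as $G_{j}\downarrow 0$, this forces $G^{-1}(x)\to\infty$ as $x\to 0^{+}$, so for $x$ small we have $h:=G^{-1}(x)\geq\max(2,h_{0}+1)$, where $h_{0}$ is the threshold beyond which $h\mapsto h^{\alpha}G_{h}$ is non-increasing. By minimality of $h$ and $G_{h-1}=g_{h}+G_{h}$,
$$g_{h}=G_{h-1}-G_{h}>x-G_{h}.$$
Next, comparison of $\sum_{k\geq h}|\gamma(k)|$ with $\int_{h-1}^{\infty}t^{-\beta}\,dt$ and $|\gamma(k)|=o(k^{-\beta})$ give $G_{h-1}=o\big((h-1)^{-(\beta-1)}\big)$; hence for $x$ small, $x<G_{h-1}\leq (h-1)^{-(\beta-1)}$, i.e. $h-1<x^{-1/(\beta-1)}$.

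The second, crucial estimate uses the monotonicity. Since $h-1\geq h_{0}$, $(h-1)^{\alpha}G_{h-1}\geq h^{\alpha}G_{h}$, so $G_{h-1}\geq(1+\tfrac1{h-1})^{\alpha}G_{h}$, and using the elementary inequality $(1+t)^{\alpha}-1\geq\tfrac{\alpha}{2}t$ for $t\in(0,1]$ (Bernoulli for $\alpha\geq1$, a second-order Taylor bound for $0<\alpha<1$),
$$g_{h}=G_{h-1}-G_{h}\geq\Big[(1+\tfrac1{h-1})^{\alpha}-1\Big]G_{h}\geq\frac{\alpha}{2(h-1)}\,G_{h}.$$
Thus $g_{h}\geq\max\big(x-G_{h},\,\tfrac{\alpha}{2(h-1)}G_{h}\big)$; the right-hand side is minimized over $G_{h}\in[0,x]$ where the two terms agree, giving $g_{h}\geq\dfrac{\alpha x}{2(h-1)+\alpha}$. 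Inserting $h-1<x^{-1/(\beta-1)}$,
$$g_{h}\geq\frac{\alpha x}{2x^{-1/(\beta-1)}+\alpha}=\frac{\alpha\,x^{\beta/(\beta-1)}}{2+\alpha x^{1/(\beta-1)}}>\frac{\alpha}{4}\,x^{\beta/(\beta-1)}$$
as soon as $\alpha x^{1/(\beta-1)}<2$, which holds for $x$ small; this is the claim. The only genuinely delicate point is converting the monotonicity of $h^{\alpha}G_{h}$ into the per-index lower bound $g_{h}\gtrsim\alpha G_{h}/h$: one must check the relevant index $h-1$ lies in the range where monotonicity is assumed (guaranteed since $G^{-1}(x)\to\infty$) and verify the inequality $(1+t)^{\alpha}-1\geq\tfrac{\alpha}{2}t$ in the two regimes $\alpha\geq1$ and $0<\alpha<1$; the rest is bookkeeping with the various smallness thresholds for $x$.
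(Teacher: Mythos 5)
Your proof is correct, and it rests on the same two ingredients as the paper's own argument: the tail bound $|\gamma(h)|=o(h^{-\beta})$ gives $G^{-1}(x)<x^{-1/(\beta-1)}$ for small $x$, and the eventual monotonicity of $h^{\alpha}G_{h}$ gives the per-index bound $g_{h}\geq \tfrac{\alpha}{2(h-1)}G_{h}$ (the paper states this as $hg_{h}\geq(\alpha/2)G_{h}$ and calls it straightforward; your derivation via $(1+t)^{\alpha}-1\geq(\alpha/2)t$, checked separately for $\alpha\geq 1$ and $0<\alpha<1$, is exactly the intended computation). Where you genuinely diverge is the final combination. The paper concludes directly from $g_{h}\geq(\alpha/2)G_{h}/h$ and $h<x^{-1/(\beta-1)}$, which implicitly requires a lower bound of the form $G_{G^{-1}(x)}\geq x/2$; the definition of $G^{-1}$ only guarantees $G_{G^{-1}(x)}\leq x$, and that lower bound can fail when the single decrement $g_{h}$ eats most of the remaining tail. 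Your extra ``overshoot'' inequality $g_{h}>x-G_{h}$ from minimality of $h=G^{-1}(x)$, combined with minimizing $\max\bigl(x-G_{h},\tfrac{\alpha}{2(h-1)}G_{h}\bigr)$ over $G_{h}\in[0,x]$, handles precisely that case (there the claim is trivial because $g_{h}$ is of order $x\gg x^{\beta/(\beta-1)}$) and still delivers the constant $\alpha/4$. So your write-up is not merely correct but closes a small step the paper leaves implicit; the smallness thresholds you flag ($h-1\geq h_{0}$ so the monotonicity applies, and $\alpha x^{1/(\beta-1)}<2$ for the final strict inequality) all check out.
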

\begin{proof}[Proof of Lemma \ref{lemma:inverse covariance}]
If $g_{h}=o(h^{-\beta})$, then $G_{h}=o(h^{1-\beta})$, and then for a small enough positive number $x$,
\begin{equation}\label{eqn:inverse covariance1}
G^{-1}(x)<x^{1/(1-\beta)}. 
\end{equation}
By Assumption \ref{assump:strong}, for large enough $h$, $h^{\alpha}G_{h}$ is non-increasing. It follows straightforwardly that for large enough $h$,
\begin{equation}\label{eqn:inverse covariance2}
hg_{h}\geq (\alpha/2)G_{h}.
\end{equation}
Hence, by \eqref{eqn:inverse covariance1} and \eqref{eqn:inverse covariance2}, for a small enough positive number $x$, 
\begin{align*}
g_{G^{-1}(x)}=\frac{G^{-1}(x)g_{G^{-1}(x)}}{(\alpha/2)G_{G^{-1}(x)}}\cdot\frac{(\alpha/2)G_{G^{-1}(x)}}{G^{-1}(x)}>(\alpha/4)x^{\beta/(\beta-1)}.
\end{align*}
\end{proof}
\printbibliography
\end{document}